\crefname{observation}{observation}{observations}
\Crefname{observation}{Observation}{Observations}
\newcommand{\F}{\mathbb{R}}
\newtheorem{theorem}{Theorem}[section]
\newtheorem{lemma}[theorem]{Lemma}
\newtheorem{claim}[theorem]{Claim}
\newcommand{\Ac}[1]{(\pm \mathbb{1})_{#1}}
\DeclareMathOperator{\pf}{pf} % Maybe this is better
\DeclareMathOperator{\sg}{sgn}
\newtheorem{corollary}[theorem]{Corollary}
\newtheorem{observation}[theorem]{Observation}
\theoremstyle{definition}
\newtheorem{remark}[theorem]{Remark}
\DeclarePairedDelimiterX{\Set}[2]{\{}{\}}{\,{#1}\,:\,{#2}\,}
\newcommand{\fsbr}[1]{\text{\textup{$\mathrm{<}\mkern-3.5mu\llap(\,#1\mathrm{>}\mkern-5.0mu\llap)$}}\,}
\DeclareMathOperator*{\nr}{nc-rank}
\DeclareMathOperator*{\qp}{quasi-poly}
\DeclareMathOperator*{\rank}{rank}
\let\span\relax
\DeclareMathOperator{\span}{span}
\DeclareMathOperator{\supp}{supp}
\newcommand{\ones}{\mathbf{1}}
\title{\textbf{Fractional Linear Matroid Matching is in quasi-NC}}
\author{Rohit Gurjar\thanks{ Indian Institute of Technology Bombay. E-mail: \href{mailto:rgurjar@cse.iitb.ac.in}{rgurjar@cse.iitb.ac.in}} \and Taihei Oki \thanks{The University of Tokyo, Tokyo, Japan. E-mail: \href{mailto:oki@mist.i.u-tokyo.ac.jp}{oki@mist.i.u-tokyo.ac.jp}} \and Roshan Raj \thanks{ Indian Institute of Technology Bombay. E-mail: \href{mailto:roshanraj@cse.iitb.ac.in}{roshanraj@cse.iitb.ac.in}}}
\date{}
\begin{document}
\maketitle  

\begin{abstract}
    The matching and linear matroid intersection problems are solvable in quasi-NC, meaning that there exist deterministic algorithms that run in polylogarithmic time and use quasi-polynomially many parallel processors.
    However, such a parallel algorithm is unknown for linear matroid matching, which generalizes both of these problems. In this work, we propose a quasi-NC algorithm for
    fractional linear matroid matching, which is a relaxation of linear matroid matching and commonly generalizes fractional matching and linear matroid intersection.
    Our algorithm builds upon the connection of fractional matroid matching to non-commutative Edmonds' problem recently revealed by Oki and Soma~(2023). As a corollary, we also solve black-box non-commutative Edmonds' problem with rank-two skew-symmetric coefficients.
\end{abstract}

\paragraph{Acknowledgements} We would like to thank Tasuku Soma for helpful comments and discussions.
Rohit Gurjar was funded by SERB MATRICS grant number MTR/2022/001009.
Taihei Oki was funded by JSPS KAKENHI Grant Number JP22K17853 and JST ERATO Grant Number JPMJER1903.
Roshan Raj was funded by PMRF grant. 

% \TO{I feel it is better to add email addresses of Rohit and Roshan.}

\section{Introduction}
Algebraic algorithms play an important role in designing parallel algorithms because of the existence of highly parallelizable algorithms to perform basic matrix operations such as matrix multiplication and determinant computation. One of the earliest results in this direction can be attributed to Lov\'{a}sz~\cite{Lov79}, where the problem of testing the existence of a perfect matching in a graph is reduced to non-singularity testing of the corresponding Tutte matrix~\cite{10.1112/jlms/s1-22.2.107}. A (randomized) algorithm is called an (R)NC algorithm if it takes poly-logarithmic time and requires polynomially many parallel processors in terms of the input size. By assigning small integral values chosen randomly to the variables, we can efficiently test the non-singularity of a linear symbolic matrix, which is a matrix with linear forms in its entries. Using this, Lov\'{a}sz~\cite{Lov79} gave an RNC algorithm to decide the existence of a perfect matching in a graph. 

In the same work, he also reduced the more general \emph{linear matroid matching} problem (also known as the \emph{linear matroid parity} problem) to non-singularity testing of an associated symbolic matrix. In this problem, a set $L$ of two-dimensional vector subspaces, called \emph{lines}, of a vector space $\mathbb{F}^n$ over a field $\mathbb{F}$ is given as an input.
A subset $M$ of $L$ is called a \emph{matroid matching} if the dimension of the sum space of the lines in $M$ is $2|M|$, i.e., $\dim(\sum_{l \in M} l) = 2|M|$.
A matroid matching of size $n/2$ is called \emph{perfect}.
Then, the (perfect) linear matroid matching problem asks for a perfect matroid matching.
This problem generalizes the problems of finding a perfect matching in graphs as well as finding a common base of two linear matroids. For a set of lines $L=\{l_1,l_2,\ldots, l_m \}$, where $l_i =  \span(a_i,b_i) $ with $a_i,b_i\in \mathbb{F}^n$ for $i \in [m]$,  Lov\'{a}sz~\cite{Lov79} defined the following symbolic matrix $A$ in variables $x_1,x_2, \ldots, x_m$:
\begin{equation}\label{eq:A}
    A=\sum_{i=1}^m (a_i b_i^\top-b_i a_i^\top) x_i.
\end{equation}
Lov\'{a}sz showed that the determinant of $A$ is non-zero if and only if a perfect matroid matching exists, which in turn implies an RNC algorithm for linear matroid matching.

Lov\'{a}sz's above algorithms only solve the decision problems, i.e., only check the existence of a solution.
Some years later,  Karp, Upfal, and Wigderson~\cite{KUW86} and Mulmuley, Vazirani, and Vazirani~\cite{MVV87} gave RNC algorithms to find a perfect matching in graphs.
The algorithm of Mulmuley, Vazirani, and Vazirani~\cite{MVV87} introduces weights on the edges of the graph and then finds a maximum-weight perfect matching with respect to the assigned weights. They observed that this can be efficiently performed in parallel if the weight assignment is \emph{isolating}, i.e., there exists a unique maximum-weight perfect matching. They showed that a weight assignment with weights chosen randomly from a small set of integers is isolating with high probability in their famous \emph{isolation lemma}. Interestingly, the isolation lemma works not just for perfect matchings but for arbitrary families of sets.
Later on, Narayanan, Saran, and Vazirani~\cite{NSV94} obtained an RNC algorithm to find a common base of two linear matroids and a perfect matroid matching for linear matroids using the isolation lemma, along the same lines as~\cite{MVV87}.

%\paragraph{Deterministic parallel algorithms for matching and linear matroid intersection}

In the deterministic setting, however, obtaining an NC algorithm is an outstanding open question even for the simplest of the cases, that is, deciding whether a bipartite graph has a perfect matching.
 Fenner, Gurjar, and Thierauf~\cite{FGT16} made significant progress in the direction of derandomization.
They provided a quasi-NC algorithm (i.e., deterministic polylogarithmic time with quasi-polynomially many parallel processors) for the bipartite matching problem, derandomizing the work of~\cite{MVV87}. Later, Svensson and Tarnwaski~\cite{ST17} and Gurjar and Thierauf~\cite{GT17} gave quasi-NC algorithms for perfect matching and linear matroid intersection, respectively. All these results go via constructing an isolating weight assignment for their respective problems in quasi-NC time. The deterministic construction of isolating weight assignments in~\cite{FGT16,ST17,GT17} relies on the linear program (LP) description of the associated polytope. For perfect matching and linear matroid intersection, the \emph{matching polytope} and the \emph{matroid intersection polytope} are defined as the convex hulls of indicator vectors of all the perfect matchings of the graph and all the common bases of two matroids, respectively.
Unfortunately, for linear matroid matching, an LP description of the \emph{linear matroid matching polytope} (the convex hull of indicator vectors of perfect matroid matchings) is still unknown, hindering us from building a quasi-NC algorithm for linear matroid matching.

%\paragraph{Our Problem.}
In this paper, we work with a relaxation of linear matroid matching polytopes, called \emph{fractional linear matroid matching polytopes}, introduced by Vande Vate~\cite{DBLP:journals/jct/Vate92} and Chang, Llewellyn, and Vande Vate~\cite{DBLP:journals/dm/ChangLV01a}.
The linear fractional matroid matching polytope is defined for a set of lines $L=\{l_1,l_2,\dots,l_m\}$ as follows.
Let $E$ be a (multi)set of $2m$ vectors $\{a_1,b_1,\dots,a_m,b_m\}$ such that $l_i=\span( a_i,b_i)$ for $i \in [m]$. A subset $S$ of $E$ is called a \textit{flat} if $e\notin \span(S)$ holds for every $e\in E\setminus S$.
Then, a fractional matroid matching polytope is a collection of non-negative vectors $y\in \mathbb{R}^m$, called \emph{fractional matroid matchings}, such that
\begin{equation}
\label{eq:FLMMP}
    \sum_{i=1}^{m}y_i\cdot \dim(\span(S)\cap l_i)\leq \dim(\span(S))
\end{equation}
holds for all flats $S$ of $E$.
This polytope is a subset of ${[0, 1]}^m$ and the vertices are half-integral~\cite{GIJSWIJT2013509}. The polytope is a relaxation of the matroid matching polytope in the sense that the set of its integer vertices for $L$ is exactly the integer vertices of the matroid matching polytope for $L$ \cite{DBLP:journals/jct/Vate92}. The fractional matching polytope for a loopless graph and the matroid intersection polytope for two linear matroids coincide with the fractional matroid matching polytope for an appropriately chosen set of lines \cite{DBLP:journals/jct/Vate92}. The \emph{size} of a fractional matroid matching $y\in \mathbb{R}^m$ is defined as $\sum_{i=1}^m y_i$.
The (perfect) \emph{fractional linear matroid matching} problem is to find a \emph{perfect} fractional matroid matching, which is a fractional matroid matching  of size $n/2$.
A polynomial-time algorithm for fractional linear matroid matching was given by Chang, Llewellyn, and Vande Vate~\cite{DBLP:journals/dm/ChangLV01,DBLP:journals/dm/ChangLV01a} and this result was extended to weighted fractional linear matroid matching by Gisjwijt and Pap~\cite{GIJSWIJT2013509}.

% \RR{There is a continuity ga}

%All of these works have one thing in common: derandomization of the \textit{isolation lemma} \cite{MVV87} for the respective problems. They construct a weight assignment on edges (elements of the ground set)

% \TO{What is the relationship between hitting sets and isolating weight assignments? I guess isolating weights are stronger than (elements of) hitting sets, as the latter does not require a maximizer to be unique}
% \RR{Yes, you are right. This paragraph is incomplete. }

%\paragraph{Non-commutative Edmonds' problem and fractional linear matroid matching}
Recently, Oki and Soma~\cite{DBLP:conf/soda/OkiS23} gave a relationship between fractional linear matroid matching and \emph{non-commutative Edmonds' problem}.
(Commutative) \emph{Edmonds' problem} is to test the non-singularity of a given linear symbolic matrix $A = \sum_{i=1}^m A_i x_i$.
Here, $A_i$'s are given $n\times n$ matrices over a field $\mathbb{F}$, and $A$ is regarded as a matrix over the rational function field $\mathbb{F}(x_1, \dotsc, x_m)$.
While the random substitution yields an efficient randomized algorithm, it is a long-standing open problem whether it can be derandomized.
In non-commutative Edmonds' problem, the variables $x_1, \dotsc, x_m$ are regarded as pairwise non-commutative, that is, $x_ix_j\neq x_jx_i$ if $i\neq j$.
Then, non-commutative Edmonds' problem asks to decide if the \textit{non-commutative rank}, denoted as $\nr(A)$, of $A$ is $n$ or not. Here, the non-commutative rank is defined as the rank of $A$ 
as a matrix over the \emph{free skew field}  $\mathbb{F}\fsbr{x_1,x_2,\dots,x_m}$, which is the quotient of the non-commutative polynomial ring $\mathbb{F}\langle x_1,x_2,\dots,x_m \rangle$~\cite{AMITSUR1966304}.
The non-commutative rank can also be characterized by the (commutative) rank of the \emph{blow-up} of $A$ defined as follows.
For $d\geq 1$, the $d$th-order blow-up of $A$, denoted by $A^{\{d\}}$, is the $dn\times dn$ linear symbolic matrix in $md^2$ variables given by
\[A^{\{d\}} = \sum_{i=1}^m X_i\otimes A_i,\] where $X_i$ is a $d\times d$ matrix with a distinct indeterminate in each entry for $i\in [m]$ and $\otimes$ denotes the Kronecker product. Then, $\nr(A)$ is equal to $\max_{d} \frac{1}{d}\rank(A^{\{d\}})$~\cite{DBLP:journals/cc/IvanyosQS18} and the inequality is attained for $d \ge n-1$~\cite{Derksen2017-jn}.
Unlike the commutative problem, non-commutative Edmonds' problem is known to be solvable in deterministic polynomial time~\cite{DBLP:conf/focs/GargGOW16,DBLP:journals/cc/IvanyosQS18,doi:10.1137/20M138836X} in the \emph{white-box} setting, i.e., the coefficient matrices $A_1, \dotsc, A_m$ are given as input.
The white-box setting is weaker than the \emph{black-box} setting, in which we need to construct a set $\mathcal{H}$, called a \emph{non-commutative hitting set}, of tuples of $m$ matrices such that for all $A=\sum_{i=1}^m x_iA_i$ with $\nr(A) = n$, there exists a tuple $(T_1,T_2,\dots,T_m)\in \mathcal{H}$ such that   $\det(\sum_{i=1}^m T_i\otimes A_i)$ is non-zero. Similar to the commutative setting, a hitting set of exponential size can be constructed trivially using Shwartz-Zippel Lemma and the polynomial dimension bounds of \cite{Derksen2017-jn}. For non-commutative Edmonds' problem under the black-box setting, even a subexponential-time deterministic algorithm is unknown~\cite{DBLP:journals/focm/GargGOW20}.  

Oki and Soma~\cite{DBLP:conf/soda/OkiS23} proved that the non-commutative rank of a linear symbolic matrix~\eqref{eq:A} with rank-two skew-symmetric coefficient matrices $A_i = a_i b_i^\top - b_i a_i^\top$ is $n$ if and only if there is a perfect fractional matroid matching in the corresponding fractional linear matroid matching polytope, analogous to Lov\'{a}sz' correspondence~\cite{Lov79} of commutative rank and matroid matching. 
They further showed that $\nr(A) = \frac12 \rank A^{\{2\}}$ holds for this $A$.
That is, the second-order blow-up is sufficient to attain $\nr(A)$ if the coefficient matrices are rank-two skew-symmetric.
Based on their results, Oki and Soma developed a randomized sequential algorithm for fractional linear matroid matching. Their result can be seen as a reduction from the search version to the decision version for the problem of fractional linear matroid matching. Then, the randomized algorithm decides the existence of a perfect fractional matroid matching, which can also be made to run in RNC.   However, it is not clear how to parallelize their reduction. 

\paragraph{Contribution.}
In this paper, we present a quasi-NC algorithm for fractional linear matroid matching, showing the following main theorem.

\begin{theorem}\label{thm:main}
    Fractional linear matroid matching is in quasi-NC.
\end{theorem}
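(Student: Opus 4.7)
The plan is to follow the isolation-lemma framework of Fenner, Gurjar, and Thierauf~\cite{FGT16} and its adaptations by Svensson and Tarnawski~\cite{ST17} and Gurjar and Thierauf~\cite{GT17}, combined with the algebraic reduction from fractional matroid matching to non-commutative Edmonds' problem by Oki and Soma~\cite{DBLP:conf/soda/OkiS23}. By the identities recalled above, a perfect fractional matroid matching exists if and only if $\nr(A) = n$, equivalently $\rank(A^{\{2\}}) = 2n$ for the second-order blow-up of the symbolic matrix~\eqref{eq:A}. Since $A^{\{2\}}$ is an ordinary linear symbolic matrix in commutative variables, its rank and determinant are amenable to standard parallel algebraic techniques. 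The task therefore reduces to (a) producing an isolating weight assignment for the fractional matroid matching polytope in quasi-NC, and (b) extracting the isolated vertex via parallel determinant computations on $A^{\{2\}}$.

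For (a), the aim is to construct a weight vector $w \in \mathbb{Z}^m$ of quasi-polynomial magnitude such that $y \mapsto \sum_{i} w_i y_i$ attains its maximum at a unique vertex of the polytope cut out by~\eqref{eq:FLMMP}. The template of~\cite{FGT16,ST17,GT17} is to identify the \emph{circuits} of the polytope---minimal nonzero vectors in the lattice generated by vertex-differences---and recursively isolate them by support size over $O(\log n)$ levels, each level roughly doubling the weight range. These circuits must be characterized through the flat inequalities that define the polytope.

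This is also the main obstacle. The vertices are only half-integral, so the lattice of vertex-differences has step $\tfrac12$ and its circuits include ``half-circuits'' arising from pairs of genuinely half-integral vertices, reminiscent of the odd-cycle/blossom obstruction that~\cite{ST17} overcame for general matching. Moreover, the flat-inequality family of~\eqref{eq:FLMMP} is considerably richer than the matching or matroid intersection constraints. The central combinatorial task is therefore to exhibit, starting from the flat description, a hierarchical decomposition of these (half-)circuits that supports recursive isolation and keeps the circuit support at each level polynomially bounded.

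Finally, for (b), given an isolating weight $w$ from step (a), we substitute the variables of $A^{\{2\}}$ using exponents proportional to $2w_i$ (the factor two clearing the half-integrality), compute the resulting (near-)univariate determinant in parallel, and read off the isolated vertex from its unique leading monomial. Its support and the half-integer values $y_i \in \{0,\tfrac12,1\}$ can then be recovered via parallel minor and rank computations on restricted subsystems, in the spirit of~\cite{MVV87,NSV94}. Assembling these pieces yields the claimed quasi-NC algorithm.
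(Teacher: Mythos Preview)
Your high-level plan matches the paper's two-part structure, but both parts contain genuine gaps where the paper supplies specific technical ingredients that your proposal does not identify.

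For part (a), you propose to characterize circuits ``through the flat inequalities that define the polytope.'' The paper does \emph{not} work from the primal description~\eqref{eq:FLMMP}. Instead it invokes Gijswijt and Pap's LP-duality result (\cref{thm:MaxFaceOfPolytope}), which says that the affine hull of any face $F$ is cut out by a system $D_F y = b_F$ with $D_F \in \{0,1,2\}^{k\times m}$ and every column summing to $2$. This dual description is what makes the face lattices tractable: one builds a multigraph $G_{D_F}$ on $[k]$ whose edges are the columns of $D_F$ (a $2$-entry giving a self-loop), shows that every lattice vector decomposes conformally into alternating-circuit indicator vectors (\cref{cl:latticeVectorDecomposition}), and then feeds Svensson--Tarnawski's bound on near-shortest alternating circuits into the Gurjar--Thierauf--Vishnoi framework (\cref{thm:IsolatingLatticeVertex}). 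Your proposal gives no indication of how a ``hierarchical decomposition of (half-)circuits'' would emerge directly from the flat inequalities, and without the Gijswijt--Pap face characterization there is no evident route to the required lattice bound.

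For part (b), a (near-)univariate substitution with exponents proportional to $w_i$ is not obviously sufficient. By the Pfaffian expansion (\cref{thm:PfaffianExpansion}), $\pf(A^{\{2\}}(\mathbf{1}))$ is a sum over half-integral $z$ and, for each $z$, over several tuples $(J_1,\dots,J_m)\in\mathcal{J}(z)$; after substitution all of these land at the same total degree $4\,w\cdot z$, so the terms at the isolated optimum $z^*$ could cancel among themselves. The paper avoids this by substituting \emph{four} variables $t_{1,1},t_{1,2},t_{2,1},t_{2,2}$, using the block-triangular structure of the extreme-point matrix from~\cite{DBLP:conf/soda/OkiS23} together with distinctness of the weights $w_i$ to single out one $J$-tuple whose contribution survives (\cref{lem:DegAndFLMM}). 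Your proposal does not address this cancellation issue, nor the need for distinct weights; and the extraction of the coordinates $y_e$ is done in the paper via perturbed weights $w^e$ rather than by ``minor and rank computations.''
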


Our algorithm comprises two parts: (i) constructing an isolating weight assignment and (ii) finding the unique maximum-weight fractional matroid matching. This generalizes the result of \cite{GT17} as the linear matroid intersection polytope coincides with the fractional linear matroid matching polytope for an appropriately chosen set of lines.

In the former part of our algorithm, we develop a quasi-NC algorithm to output a quasi-polynomially large set $\mathcal{W}$ of weight assignments such that at least one weight assignment is isolating for a given fractional linear matroid matching polytope.
To this end, we employ a parallel algorithm by Gurjar, Thierauf, and Vishnoi~\cite{DBLP:journals/siamcomp/GurjarTV21}.
For a polytope, they associate a lattice to each of its faces. Then, they show that if the lattice associated with each face of a polytope has polynomially bounded near-shortest vectors, then we can construct a quasi-polynomially large set of weight assignments such that at least one of them is isolating for the polytope (for details see \cref{thm:IsolatingLatticeVertex}). In their work, they show that a polytope for which each face lies in an affine space defined by a totally
unimodular matrix satisfies this property.  However, the faces of fractional linear matroid matching polytope do not lie in affine spaces defined by totally unimodular matrices. Hence, their work does not directly imply an isolating weight assignment for a fractional linear matroid matching polytope. To show that the lattice associated with each face of the fractional linear matroid matching polytope satisfies the above-mentioned property, we use the characterization of the faces of the fractional linear matroid matching polytope given by Gisjwijt and Pap~\cite{GIJSWIJT2013509}.

The latter part of our algorithm uses the non-commutative matrix representation of fractional linear matrix matching by Oki and Soma~\cite{DBLP:conf/soda/OkiS23}. For matching and linear matroid intersection, the parallel algorithm almost immediately follows using an isolating weight assignment. There is a one-to-one correspondence between the monomials of the pfaffian or determinant of the symbolic matrix and the perfect matchings or common bases of the matroids. After substituting the indeterminates of the symbolic matrix with univariate monomials with degrees as weights from the weight assignment, a monomial corresponding to a matching or a common base is mapped to a univariate monomial with degree equal to the weight of that matching or common base.  So, isolating a matching or a common base is equivalent to isolating a monomial of the pfaffian or  the determinant of the symbolic matrix, respectively. However, for fractional linear matroid matching, it is not immediately clear how an isolating weight assignment implies isolating a monomial in the pfaffian of the second-order blow-up of the symbolic matrix after the substitution. To show this, we use the expansion formula of the pfaffian of blow-up of the symbolic matrix and properties of extreme points of the fractional matroid matching polytope of \cite{DBLP:conf/soda/OkiS23}. 

Using our algorithm, we further derive an algorithm to construct a non-commutative hitting set of quasi-polynomial size for non-commutative Edmonds' problem where each $A_i$ is restricted to a rank-two skew-symmetric matrix, solving black-box non-commutative Edmonds' problem under this constraint in deterministic quasi-polynomial time.
See \cref{sec:black-box} for details.

% \RR{ Feel free to reduce the content of the last two paragraphs or remove them?} \TO{Seems a little bit long}

% \paragraph{Proof Idea.} Then, with access to an isolating weight assignment for these problems, they substitute the indeterminates of the corresponding symbolic matrices with  univariate monomials with their degrees based on the weight assignment. After this substitution, the degree of the determinant of the matrix turns out to be proportional to the maximum weight perfect matching or common base or perfect linear matroid matching if that exists.   

\paragraph{Organization.}
The rest of this paper is organized as follows.
\Cref{sec:preliminaries} provides preliminaries on fractional linear matroid matching and some results and definitions from linear algebra.
The proposed algorithm is described in \cref{sec:isolating,sec:algorithm}.
\Cref{sec:isolating} describes how we can construct an isolating weight assignment and \cref{sec:algorithm} gives an algorithm for finding a perfect fractional linear matroid matching.
Finally, we present a black-box algorithm for non-commutative Edmonds' problem with rank-two skew-symmetric coefficients in \cref{sec:black-box}.

% Let $M$ be a matroid on the ground set $E$ with rank function $r$. A subset $l\subseteq E$ is called a \emph{line} when $r(l)\in \{1,2\}$.  Let $L=\{l_1,l_2,\dots,l_m\}$ be a set of lines . A vector $x\in \F^{|L|}_+$
%   is called a fractional matching if it satisfies 
%   \begin{gather}
%   \label{eq:FractionalPolytope}
%         x \geq 0; \:\:\:
%       \sum_{l\in L} x_l r(F\cap cl(l)) \leq  r(F ) \:\:\: \forall \text{ flat } F \text{ of } M.
%   \end{gather} Here, $cl(l)=\{e\in E \mid r(l\cup e)=r(l)\}$.  The polytope described by the above constraints is called a fractional matroid parity polytope. The size of a fractional matroid matching $x\in \F^{|L|}_+$ is $\sum_{l\in L}x_l$ denoted by $|x|$. Here, we work with linear matroids. Each line $l_i$ is a two-dimensional subspace of $\mathbb{K}^n$ given by a basis $\{a_i,b_i\}$. The goal is to find a fractional matroid matching of size $n/2$ if it exists. We find such a fractional matroid matching in two steps:
%   \begin{itemize}
%       \item Find an isolating weight assignment $w \in \mathbb{Z}^{m}$ for fractional matroid matching polytope. In other words, we find a weight assignment $w$ such that there exists a unique point in the polytope maximizing $\sum_{l\in L} w_l x_l$.
%       \item Using this weight assignment, we give a parallel algorithm to find a fractional matroid matching of maximum size.
%   \end{itemize}

 \label{sec:introduction}
\section{Preliminaries and Notations}\label{sec:preliminaries}
We give the notations and definitions that we are going to use. Let $\mathbb{R}$, $\mathbb{Z}$, and $\mathbb{Z}_+$ represent the set of real numbers, integers, and non-negative integers, respectively. For a positive integer $n$, we denote by $[n]$ the set of integers $\{1,2,\dots,n\}$.

Let $\mathbb{F}$ denote the ground field of sufficient size.
For a vector $v \in \mathbb{F}^n$ and $i \in [n]$, $v_i$ denotes the $i$th component of $v$.
For vectors $a, b \in \mathbb{F}^n$, let $a \wedge b \coloneqq ab^\top - ba^\top$.
Let $A$ be an $n \times m$ matrix over $\mathbb{F}$.
For $S\subseteq [n]$ and $T\subseteq [m]$, $A[S,T]$ denote the submatrix of $A$ obtained by taking rows and columns of $A$ indexed by $S$ and $T$, respectively.
%If $S$ ($T$) is all the rows (resp. columns), we denote $A[S,T]$ as $A[*, T]$ (resp. $A[S, *]$).
If $S$ is all the rows, we write $A[S, T]$ as $A[T]$.
If $S$ and $T$ are singletons, say $S = \{i\}$ and $T = \{j\}$, we simply write $A[\{i\}, \{j\}]$ as $A[i, j]$.
For vector spaces $V,W \subseteq \mathbb{F}^n$, we mean by $V\leq W$ that $V$ is a subspace of $W$. For vectors $a_1,  \dots, a_k\in \mathbb{F}^n$, $\langle a_1, \dotsc, a_k \rangle $ denotes the vector space spanned by $a_1, a_2, \dots, a_k$.

For two real vectors $x,y\in \mathbb{R}$, we mean by $x\leq y$ that $x_i\leq y_i$ holds for all $i\in [m]$.
The \emph{cardinality} of a non-negative vector $v\in \mathbb{R}^m$, denoted by $|v|$, is the $L_1$ norm $\sum_{i=1}^m |v_i|$.
Let $\ones$ denote the all-one vector of appropriate dimension.

\subsection{Linear Algebra Toolbox}
Here, we present some definitions and results of linear algebra that we use later.
The following result relates to the parallel computation of determinants, assuming that field operations take unit time. 
\begin{theorem}[\cite{BORODIN1983113}] \label{thm:DetQuasiNC}
    For an $n\times n$ matrix $A$ with entries as polynomials in a constant number of variables with individual degree at most  $d$, $\det(A)$  can be computed in $\polylog(n,d)$ time using $\poly(n,d)$ parallel processors.
\end{theorem}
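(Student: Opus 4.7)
}
The plan is to reduce the symbolic determinant computation to scalar determinant computations by multivariate polynomial interpolation, so that the standard parallel algorithms for numerical determinants can be invoked in parallel across many evaluation points. Say there are $k = O(1)$ variables $t_1, \dotsc, t_k$, each appearing in the entries of $A$ with degree at most $d$. Since the determinant is multilinear in the rows, its individual degree in each $t_j$ is at most $nd$. Hence $\det(A)$ is a polynomial in $k$ variables with individual degree $\le nd$, determined uniquely by its values on the grid $G \coloneqq \{0, 1, \dotsc, nd\}^k$, which has size $(nd+1)^k = \poly(n,d)$ because $k$ is constant.

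The first step is to evaluate, \emph{in parallel over all} $\alpha \in G$, the matrix $A(\alpha)$ by substituting $t_j = \alpha_j$ into each of the $n^2$ polynomial entries. Each substitution reduces to computing a constant-variate polynomial of size $\poly(d)$ at a point, which is trivially in $\polylog(d)$ time with $\poly(d)$ processors via a parallel prefix product. The second step is to compute $\det(A(\alpha))$ for each $\alpha \in G$ in parallel, using (for example) Berkowitz's algorithm, which gives a circuit of depth $O(\log^2 n)$ and size $\poly(n)$ over the underlying ring without requiring division. Doing this for every $\alpha \in G$ simultaneously uses $|G|\cdot \poly(n) = \poly(n,d)$ processors and $\polylog(n)$ depth.

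The third step is multivariate interpolation: given the values $\{\det(A(\alpha))\}_{\alpha\in G}$, recover the coefficients of $\det(A)$ as a polynomial in $t_1, \dotsc, t_k$. Because $k$ is constant, this reduces to $O(1)$ nested univariate interpolations at $nd+1$ points, each of which is a linear transform by a fixed Vandermonde-type matrix of size $\poly(n,d)$ and can be carried out in $\polylog(n,d)$ time and $\poly(n,d)$ processors via parallel matrix-vector multiplication and the parallel solution of triangular linear systems. Composing these three phases yields total depth $\polylog(n,d)$ and total processor count $\poly(n,d)$, as required.

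I expect the only slightly delicate point to be the interpolation step, namely ensuring that the multivariate interpolation over a product grid really does factor into $k = O(1)$ univariate ones with well-conditioned (invertible) Vandermonde systems; this is why one picks a grid of $nd+1$ distinct scalars per coordinate in a field of sufficient size, which is already assumed in the preliminaries. Everything else is off-the-shelf: Berkowitz for division-free parallel determinants and parallel prefix for polynomial evaluation.
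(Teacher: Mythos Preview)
The paper does not actually prove \Cref{thm:DetQuasiNC}; it is stated as a known result with a citation to~\cite{BORODIN1983113} and used as a black box. So there is no ``paper's own proof'' to compare against.

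That said, your proposal is a correct and standard way to establish the statement from first principles. The reduction via multivariate interpolation over a product grid of size $(nd+1)^k = \poly(n,d)$ is sound precisely because $k = O(1)$, and invoking Berkowitz's division-free determinant algorithm at each grid point gives the required depth and processor bounds. Your caveat about needing $nd+1$ distinct field elements is the right one, and it is covered by the paper's standing assumption that $\mathbb{F}$ has sufficient size. One minor remark: you could equally well cite Csanky or the original Borodin--von zur Gathen--Hopcroft result for the scalar determinant step, but Berkowitz is the cleanest choice since it avoids division and works over any commutative ring.
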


For an $n\times m$ matrix $A={(a_{i,j})}_{i,j}$  and a $p\times q$ matrix $B$, the \emph{Kronecker product}, denoted by $A\otimes B$, is the $np\times mq$ matrix defined as
\[ % This is simpler
    A \otimes B = \begin{bNiceMatrix}
        a_{1,1}B & \Cdots & a_{1,m}B \\
        \Vdots & \Ddots & \Vdots \\
        a_{n,1}B & \Cdots & a_{n,m}B
    \end{bNiceMatrix}.
\]
\begin{comment}
\[A\otimes B =     \begin{tikzpicture}[baseline=(current bounding box.center)]
\matrix (m) [matrix of math nodes,nodes in empty cells,right delimiter={]},left delimiter={[} ]{
 a_{1,1}B &  & & a_{1,m}B  \\
   & & &  \\
  & & & \\
  a_{n,1}B& & & a_{n,m}B\\
} ;
\draw[thick][ dotted] (m-1-1)-- (m-4-1);
\draw[thick][ dotted] (m-1-1)-- (m-1-4);
\draw[thick][dotted] (m-1-1)-- (m-4-4);
\draw[thick][dotted] (m-4-1) -- (m-4-4);
\draw[thick][dotted] (m-1-4)--(m-4-4);
\end{tikzpicture}.
\]
\end{comment}
For a $2n\times 2n$ skew-symmetric matrix $A={(a_{i,j})}_{i,j}$, its \emph{Pfaffian} is defined as
\[\pf(A)=\frac{1}{2^nn!}\sum_{\sigma\in S_{2n}} \sg(\sigma)\prod_{i=1}^n a_{\sigma(2i-1),\sigma(2i)}\]
where $S_{2n}$ is the set of permutations $\sigma:[2n]\xrightarrow{}[2n].$ The Pfaffian is defined to be zero for skew-symmetric matrices with odd sizes, and it satisfies $(\pf(A))^2= \det(A)$.

% \subsection{Linear matroid matching}\TO{remove this section?}
% Let $l_1,l_2,\dots ,l_m \leq \mathbb{F}^n$ be two-dimensional vector subspaces, also known as \emph{lines}. A subset $M$ of $L=\{l_1,l_2,\dots,l_m\}$ is called a \emph{matroid matching} if $\dim(\sum_{e \in M} l_e) = 2|M|.$  A matroid matching $M$ with $2|M|=n$ is called \emph{perfect}. The problem of finding a matroid matching of maximum cardinality is called \textit{linear matroid matching}~\cite{Lawler1976-sr}. It generalizes the matching problem in graphs and linear matroid intersection. Lov\'{a}sz~\cite{Lov79} showed that the maximum cardinality of a matroid matching is equal to half the rank of the following symbolic matrix:
% \begin{equation} \label{eq:A}
%     A = \sum_{i=1}^m x_i(a_i \wedge b_i),
% \end{equation}
% where $l_i=\langle a_i,b_i\rangle$ for $i \in [m]$ and the rank means that of a matrix over the rational function field $\mathbb{F}(x_1,x_2,\dots, x_n)$.

% For a matroid matching $M=\{l_{i_1},l_{i_2},\dots,l_{i_k} \}$, let $\chi(M)\in \mathbb{R}^m$ denote the characteristic vector of the set $\{i_1,i_2,\dots, i_k\}$. The \textit{linear matroid matching polytope} corresponding to a set of lines $L$ is the convex hull of $\chi(M)$ for all matroid matchings $M$ of $L$. Unlike matching for graphs and linear matroid intersection, the polyhedral description of linear matroid matching polytopes is still unknown.

\subsection{Fractional linear matroid matching}

Oki and Soma~\cite{DBLP:conf/soda/OkiS23} showed that the maximum cardinality of a fractional linear matroid matching is equal to half the non-commutative rank of the linear symbolic matrix $A$ given in~\eqref{eq:A}, analogous to Lov\'{a}sz' result~\cite{Lov79}.
They further showed that only second-order blow-up is sufficient to attain the non-commutative rank for $A$, summarized as follows.

\begin{theorem}[{\cite[Theorem~3.1]{DBLP:conf/soda/OkiS23}}]\label{thm:ncrank}
  Let $P$ be a fractional linear matroid matching polytope and $A$ the associated linear symbolic matrix defined by~\eqref{eq:A}.
  Then, it holds that
  \[
      \max_{y \in P} |y|
      = \frac12 \nr(A)
      = \frac14 \rank(A^{\{2\}}).
  \]
\end{theorem}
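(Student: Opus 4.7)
The plan is to prove the two equalities
$\max_{y\in P} |y| = \tfrac{1}{2}\nr(A)$ and $\nr(A) = \tfrac{1}{2}\rank(A^{\{2\}})$
by establishing the cyclic chain
\[
    \max_{y\in P}|y| \;\leq\; \tfrac{1}{4}\rank(A^{\{2\}}) \;\leq\; \tfrac{1}{2}\nr(A) \;\leq\; \max_{y\in P}|y|,
\]
which forces all three quantities to coincide. The middle inequality is the general blow-up bound $\rank(A^{\{d\}}) \leq d\cdot\nr(A)$ specialised to $d=2$, already recalled in the introduction, so only the outer two inequalities require work.

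For the lower bound $\rank(A^{\{2\}}) \geq 4\max|y|$, I would start from a half-integral extreme point $y\in P$ of maximum size, which exists by half-integrality of $P$. Its support splits into indices with $y_i = 1$ and indices with $y_i = 1/2$; the latter are known to be organised into \emph{odd pseudo-cycles} of lines sharing common vectors. I would then exhibit a $4|y|\times 4|y|$ principal submatrix of $A^{\{2\}}$ together with an evaluation of the variable blocks $X_i$ under which the resulting skew-symmetric matrix has non-zero Pfaffian: each $y_i=1$ line contributes an independent $4\times 4$ block $X_i\otimes(a_i\wedge b_i)$, while each odd pseudo-cycle of half-integral lines contributes a $4k\times 4k$ block whose Pfaffian survives because the odd-cycle parity structure prevents the leading $X_i$-monomials from cancelling. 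Multiplying these contributions yields the required rank bound.

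For the upper bound $\nr(A) \leq 2\max|y|$, I would invoke the classical duality for non-commutative rank,
\[
    \nr(A) \;=\; n - \max\bigl\{\dim U + \dim V - n : U^\top A_i V = 0 \text{ for all } i\in[m]\bigr\}.
\]
Because each $A_i = a_i\wedge b_i$ is skew-symmetric, any maximising pair $(U,V)$ can be symmetrised to $U = V$, yielding an \emph{isotropic} subspace $U$ with $u^\top A_i u' = 0$ for all $u,u'\in U$ and all $i$. Setting $S \coloneqq E \cap U^\perp$, I would verify that $S$ is a flat of $E$ and that the isotropy condition forces, for each line $l_i$, the bound $\dim(\span(S)\cap l_i) \geq 2\dim(U\cap l_i^\perp) - \dim U + (\text{correction})$, which assembles into the flat inequality~\eqref{eq:FLMMP}. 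Summing this inequality against any feasible $y$ then gives $2|y| \leq n - (2\dim U - n) = \nr(A)$, as desired.

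The main obstacle is the upper-bound step: translating an algebraic isotropic subspace into a combinatorial flat and verifying that the line-by-line isotropy constraints reassemble into precisely~\eqref{eq:FLMMP}. The subtle point is handling lines $l_i$ that only partially intersect $U^\perp$, where the single-dimensional slack in $U^\perp\cap l_i$ must be booked as the half-integral value $y_i = 1/2$; isotropy essentially forces each line into one of three prescribed regimes mirroring $y_i\in\{0,1/2,1\}$. The construction step is more routine but still demands a parity argument to certify non-vanishing of Pfaffians indexed by odd pseudo-cycles, which is exactly where the restriction to second-order blow-up (as opposed to first-order, which corresponds to integral matchings) is essential.
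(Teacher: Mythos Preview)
This theorem is not proved in the present paper at all: it is quoted verbatim as \cite[Theorem~3.1]{DBLP:conf/soda/OkiS23} and used as a black box, so there is no in-paper argument to compare your sketch against. That said, your sketch has a genuine gap worth flagging.

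Your cyclic chain and its first two links are fine. The Pfaffian construction from a half-integral extreme point does give $\rank(A^{\{2\}})\ge 4\max_{y\in P}|y|$ (this is essentially \cite[Lemma~4.4]{DBLP:conf/soda/OkiS23}, whose block-triangular structure is in fact reproduced inside the proof of \cref{lem:DegAndFLMM} here), and $\rank(A^{\{2\}})\le 2\,\nr(A)$ is the standard blow-up inequality.

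The third link is where the argument breaks: what you describe proves the \emph{opposite} inequality to the one you claim. You start from an optimal isotropic subspace $U$, so that $\nr(A)=2(n-\dim U)$, manufacture a flat from $U$, and apply~\eqref{eq:FLMMP}. That yields, exactly as your own last line says, $2|y|\le n-(2\dim U-n)=\nr(A)$ for every feasible $y$, i.e.\ $\max|y|\le\tfrac12\nr(A)$. This is redundant with the first two links and does not close the cycle. (As an aside, the symmetrisation ``any maximising $(U,V)$ can be taken with $U=V$'' is true for skew-symmetric coefficient spaces but is itself a non-trivial lemma, not an immediate consequence of $A_i^\top=-A_i$.)

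What is actually required for $\nr(A)\le 2\max|y|$ is the \emph{reverse} construction: start from an optimal dual certificate for fractional matroid matching, namely a chain of flats attaining the minimum in the Chang--Llewellyn--Vande Vate min-max formula, and build from it an isotropic subspace $U$ with $\dim U\ge n-\max|y|$. That subspace then witnesses $\nr(A)\le 2(n-\dim U)\le 2\max|y|$. This dual-cover-to-isotropic-subspace direction is the substantive step in the Oki--Soma proof, and it is absent from your plan.
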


We use the following refinement of \cref{thm:ncrank}.
For a half-integral vector  $y\in \{0,1/2,1\}^m,$ let $$A^{\{2\}}(y)= \sum_{i=1}^m Y_i\otimes A_i,$$ where $Y_i=U_iU_i^T$ and $U_i$ is a  $2\times 2y_i$ matrix with indeterminates in its entries for $i\in[m]$.  In other words, $A^{\{2\}}(y)$ is the matrix obtained by substituting the $2\times 2$ symmetric matrix $Y_i$ of rank $2y_i$ into $X_i$ in the second-order blow-up $A^{\{{2}\}}$ for $i\in [m]$. 
Note that $\rank A^{\{2\}}(y) \le \rank A^{\{2\}}(\ones) \leq \rank A^{\{2\}}$ always holds.

\begin{theorem}[{\cite[Lemma~4.5]{DBLP:conf/soda/OkiS23}}] \label{thm:rankA2Y}
    Let $P$ be a fractional linear matroid matching polytope and $A$ the associated linear symbolic matrix defined by~\eqref{eq:A}.
    % For $y\in \{0,1/2,1\}^m$, it holds that $\rank(A^{\{2\}}(y))=\rank(A^{\{2\}})$ if  there exists $z\in \{0,1/2,1\}^m$ such that $z\leq y,$ $z\in P$, and $|z|=\max_{x\in P}|x|.$ 
    For $y\in \{0,1/2,1\}^m$, if $\rank(A^{\{2\}}(y))=\rank(A^{\{2\}})$ holds, then there exists $z\in \{0,1/2,1\}^m$ such that $z\leq y,$ $z\in P$, and $|z|=\max_{x\in P}|x|$.
    Conversely, if there exists such a point $z$ which is an extreme point of $P$, then $\rank(A^{\{2\}}(y))=\rank(A^{\{2\}})$ holds.
\end{theorem}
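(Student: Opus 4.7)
The plan is to prove both directions via the identity
\[
\rank A^{\{2\}}(y) = 4 \max\bigl\{|z| : z \in P,\ z \le y\bigr\},
\]
which, combined with \cref{thm:ncrank} ($\rank A^{\{2\}} = 4\max_{x\in P}|x|$), reduces the equality $\rank A^{\{2\}}(y) = \rank A^{\{2\}}$ to the equality of the constrained and unconstrained maxima of $|\cdot|$ over $P$. Assuming the identity, the forward direction follows because the polytope $\{z \in P : z \le y\}$ has half-integral vertices (inherited from $P$), so a vertex attaining the maximum of $|z|$ supplies a $z \in \{0, 1/2, 1\}^m$ with $z \le y$ and $|z| = \max_{x \in P}|x|$. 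The converse is then immediate, since an extreme-point witness $z \le y$ with $|z| = \max_{x \in P}|x|$ certifies the equality of the two maxima, and hence of the two ranks.

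The identity is proved by recognising $A^{\{2\}}(y)$ as a Lov\'asz-type symbolic matrix of a blown-up linear matroid matching instance. Expanding $U_iU_i^\top = \sum_{j=1}^{2y_i} u_{i,j} u_{i,j}^\top$, with $u_{i,j}$ the $j$th column of $U_i$, and applying the tensor identity $(uu^\top)\otimes(a\wedge b) = (u\otimes a)\wedge(u\otimes b)$, one gets
\[
A^{\{2\}}(y) = \sum_{i=1}^m \sum_{j=1}^{2y_i} (u_{i,j}\otimes a_i) \wedge (u_{i,j}\otimes b_i).
\]
Over the rational function field in the entries of the $U_i$'s, this is exactly the Lov\'asz symbolic matrix~\cite{Lov79} for the linear matroid matching instance on the $2|y|$ blown-up lines $\tilde l_{i,j} := u_{i,j}\otimes l_i$; the usual Tutte-style scalar in front of each line can be absorbed into $u_{i,j}$ by a quadratic rescaling that does not affect the rank. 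Lov\'asz's theorem therefore gives $\rank A^{\{2\}}(y) = 2\mu$, where $\mu$ is the maximum size of an integer matroid matching in this blown-up instance.

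It remains to establish $\mu = 2\max\{|z| : z \in P,\ z \le y\}$. For $\mu \ge 2|z|$ given a half-integral $z \le y$ in $P$, I take $S := \{(i,j) : j \le 2z_i\}$ of size $2|z|$; this $S$ is a matroid matching provided the map $\bigoplus_i \span(u_{i,1},\ldots,u_{i,2z_i}) \otimes l_i \to \mathbb{F}(u)^{2n}$ is generically injective, and any would-be dependence restricts to a subspace of the form $\mathbb{F}^2\otimes W$ for some $W\subseteq \mathbb{F}^n$, forcing the inequality $\sum_i z_i \dim(l_i\cap W) > \dim W$ and contradicting $z\in P$. For the reverse direction, given a matroid matching $S$ in the blown-up instance, I set $z_i := |\{j : (i,j)\in S\}|/2 \in \{0, 1/2, 1\}$; automatically $z \le y$, and $z \in P$ follows by lifting each flat $T \subseteq E$ to $\mathbb{F}^2 \otimes \span T$, using $\dim(\tilde l_{i,j} \cap (\mathbb{F}^2 \otimes \span T)) = \dim(l_i \cap \span T)$, and observing that the $\tilde l_{i,j}$'s, being in direct sum, remain in direct sum after intersection with the lifted flat. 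I expect the main obstacle to be the forward direction $\mu \ge 2|z|$: controlling the generic linear dependences among the $u_{i,j}$'s (which live in a two-dimensional space) so that the flat inequalities of $P$ are precisely what rule them out.
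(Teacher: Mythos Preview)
This theorem is not proved in the paper; it is quoted verbatim as \cite[Lemma~4.5]{DBLP:conf/soda/OkiS23}. So there is no in-paper proof to compare your attempt against. What follows is an assessment of your sketch on its own merits.

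Your reduction to the identity $\rank A^{\{2\}}(y)=4\max\{|z|:z\in P,\ z\le y\}$ is a clean organizing principle, and your ``$\mu\le$'' direction (building a half-integral $z\in P$ from a maximum matroid matching $S$ in the blown-up instance by $z_i=|\{j:(i,j)\in S\}|/2$, and checking the flat inequality by intersecting with $\mathbb{F}^2\otimes\span T$) is correct. Two points, however, need attention.

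First, the assertion that ``the polytope $\{z\in P:z\le y\}$ has half-integral vertices inherited from $P$'' is not true in general: intersecting a half-integral polytope with a box $\{z_i\le y_i\}$ can create non-half-integral vertices. Fortunately you do not need this: the half-integral witness $z$ already comes for free from your ``$\mu\le$'' construction, so this step can simply be dropped.

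Second, and more seriously, the ``$\mu\ge 2|z|$'' direction is where the real content lies, and your sketch does not establish it. You write that ``any would-be dependence restricts to a subspace of the form $\mathbb{F}^2\otimes W$,'' but a linear relation $\sum_i(p_i\otimes a_i+q_i\otimes b_i)=0$ with $p_i,q_i\in\span(u_{i,1},\dots,u_{i,2z_i})\subseteq\mathbb{F}(u)^2$ does not, on its face, single out any subspace $W\subseteq\mathbb{F}^n$ for which $\sum_i z_i\dim(l_i\cap W)>\dim W$. Note that the easy implication goes the other way: a violated flat inequality for $W$ does force a dependence inside $\mathbb{F}(u)^2\otimes W$, which is exactly your ``$\mu\le$'' argument. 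Extracting a violated $W$ from a generic dependence is genuinely harder and is, in effect, the substance of the Oki--Soma lemma you are trying to prove. One workable route, closer to what Oki--Soma actually do (and to what this paper invokes in the proof of \cref{lem:DegAndFLMM}), is to restrict to the case where $z$ is an \emph{extreme point} of $P$---which is all the converse direction of the theorem demands---and use the structural decomposition of extreme points (the block-triangular form from \cite[proof of Lemma~4.4]{DBLP:conf/soda/OkiS23}) to exhibit full rank directly. Your stronger claim, that every half-integral $z\in P$ yields a matroid matching of size $2|z|$ in the blow-up, may well be true, but it is not what is needed and your sketch does not prove it.
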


The matrix $ A^{\{2\}}(y)$ is skew-symmetric because each $Y_i$ is symmetric.
Oki and Soma also gave the following expansion formula for the Pfaffian of $A^{\{2\}}(y)$  that we will use later. Let $B_i=[a_i \:\: b_i]$ for $i\in [m]$.
\begin{theorem}[{\cite[Lemma 4.1]{DBLP:conf/soda/OkiS23}}]\label{thm:PfaffianExpansion}
    For $y\in \{0,1/2,1\}^m$, it holds that
    \begin{equation}
        \pf(A^{\{2\}}(y))=\sum\limits_{\substack{z\in {\{0,\frac{1}{2},1\}}^m,\\|z|=\frac{n}{2},\, z\leq y }}\sum_{(J_1,\dots, J_m)\in \mathcal{J}^y(z)} \det([(U_1\otimes B_1)[J_1] \cdots (U_m\otimes B_m)[J_m]]),
    \end{equation}
    where $\mathcal{J}^y(z)$ is the family of $m$ tuples $(J_1,J_2,\dots, J_m)$ such that\[J_i=\begin{cases}
        \{1,2,3,4\} & \text{ if $z_i=1$, }\\
        \{1,2\} or \{3,4\} & \text{ if $y_i=1,z_i=1/2$,}\\
        \{1,2\} & \text{ if $y_i=z_i=1/2$,}\\
        \emptyset & \text{ if $z_i=0$.}
    \end{cases}\]
\end{theorem}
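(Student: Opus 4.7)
The plan is to write $A^{\{2\}}(y) = CKC^\top$ for an explicit tall matrix $C$ and a sparse skew-symmetric matrix $K$, and then expand the Pfaffian via the Cauchy--Binet (Ishikawa--Wakayama) formula for $\pf(CKC^\top)$.

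To obtain the factorization, first observe $A_i = a_i\wedge b_i = B_i J B_i^\top$ where $J$ is the $2\times 2$ skew-symmetric matrix with $J_{1,2}=1$, $J_{2,1}=-1$, and $Y_i = U_i U_i^\top$ by construction of $A^{\{2\}}(y)$. Two applications of the mixed-product rule $(X\otimes Y)(V\otimes W)=(XV)\otimes(YW)$ then yield
\[
    Y_i \otimes A_i \;=\; (U_i U_i^\top)\otimes(B_i J B_i^\top) \;=\; (U_i\otimes B_i)\,(I_{2y_i}\otimes J)\,(U_i\otimes B_i)^\top.
\]
Setting $C \coloneqq [\,U_1\otimes B_1 \mid \cdots \mid U_m\otimes B_m\,]$ (a $2n\times 4|y|$ matrix) and letting $K$ be the $4|y|\times 4|y|$ block-diagonal skew-symmetric matrix whose $i$-th diagonal block is $K_i \coloneqq I_{2y_i}\otimes J$, the sum becomes $A^{\{2\}}(y) = CKC^\top$.

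Applying the Pfaffian Cauchy--Binet identity now gives
\[
    \pf(CKC^\top) \;=\; \sum_{\substack{S\subseteq[4|y|]\\ |S|=2n}} \det(C[\cdot,S])\,\pf(K[S,S]).
\]
Any such $S$ splits uniquely as $S = J_1\sqcup\cdots\sqcup J_m$ with $J_i\subseteq[4y_i]$ drawn from the $i$-th line-block, and the block-diagonality of $K$ factorizes $\pf(K[S,S]) = \prod_i \pf(K_i[J_i,J_i])$. Since $K_i = I_{2y_i}\otimes J$ is itself block-diagonal with $2y_i$ copies of $J$ occupying the adjacent index pairs $\{1,2\}$ (and, when $y_i=1$, also $\{3,4\}$) in the natural Kronecker ordering of the columns of $U_i\otimes B_i$, the factor $\pf(K_i[J_i,J_i])$ equals $1$ when $J_i$ is a union of such pairs and vanishes otherwise.

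Finally, setting $z_i \coloneqq |J_i|/4 \in\{0,\tfrac12,1\}$, the constraint $\sum_i|J_i|=2n$ translates to $|z|=n/2$, the inclusion $J_i\subseteq[4y_i]$ translates to $z\le y$, and the list of admissible $J_i$'s for fixed $(y_i,z_i)$ matches precisely the definition of $\mathcal{J}^y(z)$. Substituting $\pf(K[S,S])=1$ for the surviving terms into the Cauchy--Binet expansion then yields the claimed formula. The one delicate point in this plan is the sign convention in the Pfaffian Cauchy--Binet identity --- namely, that with natural row/column orderings no extra sign enters --- which I would handle by appealing to the standard Ishikawa--Wakayama minor summation formula rather than re-deriving it.
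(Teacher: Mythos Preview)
The paper does not prove this statement: it is quoted verbatim from Oki and Soma~\cite[Lemma~4.1]{DBLP:conf/soda/OkiS23} and used as a black box. Your argument is correct and is essentially the standard derivation one would expect for such a formula. The factorization $Y_i\otimes A_i=(U_i\otimes B_i)(I_{2y_i}\otimes J)(U_i\otimes B_i)^\top$ via the mixed-product rule is right, and the Ishikawa--Wakayama minor-summation identity $\pf(CKC^\top)=\sum_{|S|=2n}\det(C[\cdot,S])\,\pf(K[S,S])$ applies exactly as you describe. Because $K$ is block-diagonal with each block itself block-diagonal in copies of $J$, the only subsets $S$ with $\pf(K[S,S])\ne 0$ are those whose restriction $J_i$ to the $i$th block is a union of the adjacent pairs $\{1,2\}$ and (when $y_i=1$) $\{3,4\}$, and for those the Pfaffian equals $1$; this recovers precisely the family $\mathcal{J}^y(z)$ after setting $z_i=|J_i|/4$. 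Your caveat about signs is well placed but harmless here: with the natural column ordering and the block structure of $K$, no extraneous sign appears.
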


The \textit{weighted fractional linear matroid matching} problem is to find a fractional matroid matching $y$ that maximizes $w\cdot y$ for a non-negative weight assignment $w:L\xrightarrow{}\mathbb{Z}_+$. Gijswijt and Pap \cite{GIJSWIJT2013509} gave a polynomial time algorithm for weighted fractional linear matroid matching. They also gave the following characterization for maximizing face of the  polytope with respect to a weight function. It is derived by showing the existence of a solution to the dual of the linear program of maximizing $w\cdot y$ over the inequalities~\eqref{eq:FLMMP} such that the support of the solution forms a chain.
\begin{theorem}[{see~\cite[proof of Theorem~1]{GIJSWIJT2013509}}] \label{thm:MaxFaceOfPolytope}
    Let $L=\{l_1,l_2,\dots,l_m\}$ be a set of lines with $l_i\leq \mathbb{F}^n$ and $w:L\xrightarrow{}\mathbb{Z}$ be a weight assignment on $L$. Let $F$ denote the set of fractional linear matroid matchings maximizing $w$ and $S\subseteq [m]$ such that every $y\in F$ has $y_e=0$ for $e\in S$. Then for some $k\leq n$, there exists a $k\times m$ matrix  $D_F$  and $b_F\in \mathbb{Z}^k$ such that
    \begin{itemize}
        \item the entries of $D_F$ are from $\{0,1,2\}$,
        \item the sum of the entries in any column of $D_F$ is exactly two, and
        \item a fractional matroid matching $y$ is in $F$ if and only if $y_e=0 \text{ for } e\in S$ and $D_Fy=b_F$.
    \end{itemize}
\end{theorem}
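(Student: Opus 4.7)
The plan is to derive the characterization via LP duality applied to the weighted fractional matroid matching problem, following Gijswijt and Pap. Write the primal as
\[
\max\Bigl\{w^\top y : y\ge 0,\ \sum_{i=1}^m y_i\dim(\span(S)\cap l_i)\le \dim(\span(S))\ \text{for every flat } S\subseteq E\Bigr\}
\]
and form its dual, with one variable $\mu_S\ge 0$ per flat $S$, objective $\sum_S\mu_S\dim(\span(S))$, and one constraint $\sum_S\mu_S\dim(\span(S)\cap l_i)\ge w_i$ per $i\in[m]$. The maximizing face $F$ is then characterized, via complementary slackness, by the tight primal inequalities at any fixed dual optimum.

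The heart of the proof is Gijswijt-Pap's structural lemma: there exists an optimal dual $\mu^*$ whose support $\{S:\mu^*_S>0\}$ forms a chain of flats $S_1\subsetneq S_2\subsetneq\cdots\subsetneq S_k$. Since $S_j$ is a flat, no element of $S_{j+1}\setminus S_j$ lies in $\span(S_j)$, so the subspaces $\span(S_j)$ strictly increase in dimension and $k\le n$. One proves the chain property by an uncrossing argument on the lattice of flats: whenever incomparable $T,T'$ appear in the support, one redistributes mass onto two comparable flats obtained via closures of their meet and join, preserving feasibility and the objective using submodularity of $T\mapsto\dim(\span(T))$ together with a compatible inequality for $T\mapsto\dim(\span(T)\cap l_i)$. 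Carrying this uncrossing out rigorously, and simultaneously arranging that the top flat $S_k$ satisfies $l_i\subseteq\span(S_k)$ for every $i$ outside the forced-zero set, is the main obstacle.

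Granted such a chain-supported dual optimum, define $S:=\{i\in[m]:\sum_j\mu^*_{S_j}\dim(\span(S_j)\cap l_i)>w_i\}$. Complementary slackness then gives that $y\in F$ if and only if $y$ is feasible for the polytope, $y_i=0$ for all $i\in S$, and $\sum_i y_i\dim(\span(S_j)\cap l_i)=\dim(\span(S_j))$ holds for every $j\in[k]$.

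Finally, telescoping produces $D_F$ and $b_F$. Set $S_0:=\emptyset$, and for $j\in[k]$ and $i\notin S$ define
\[
D_F[j,i]:=\dim(\span(S_j)\cap l_i)-\dim(\span(S_{j-1})\cap l_i),\qquad b_F[j]:=\dim(\span(S_j))-\dim(\span(S_{j-1})).
\]
Monotonicity of $\dim(\span(S_j)\cap l_i)$ in $j$ combined with $\dim(l_i)=2$ forces each entry into $\{0,1,2\}$, and summing the first $j$ defining equations recovers the original tight constraint at $S_j$, so $D_Fy=b_F$ is equivalent (under $y_i=0$ on $S$) to the conjunction of the $k$ tight equalities. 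The $i$-th column for $i\notin S$ telescopes to $\dim(\span(S_k)\cap l_i)=2$ by the arrangement of $S_k$ noted above. For $i\in S$ the $i$-th column does not affect $D_Fy=b_F$ because $y_i=0$, so it may be set arbitrarily subject to summing to $2$ (for instance, placing a $2$ in row $k$), completing the construction.
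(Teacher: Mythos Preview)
Your approach is exactly what the paper indicates: the paper does not prove this theorem itself but cites Gijswijt--Pap and summarizes the derivation in one sentence, namely ``It is derived by showing the existence of a solution to the dual of the linear program of maximizing $w\cdot y$ over the inequalities~\eqref{eq:FLMMP} such that the support of the solution forms a chain.'' Your sketch (LP duality, chain-supported dual optimum via uncrossing, complementary slackness, then telescoping to build $D_F$) unpacks precisely this, and you correctly flag the two places where one must appeal to Gijswijt--Pap's argument for the details (the uncrossing and arranging $l_i\subseteq\span(S_k)$ for $i\notin S$).
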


\section{Isolating Weight Assignment for Fractional Matroid Parity}\label{sec:isolating}
In this section, we describe how we can construct an isolating weight assignment for fractional matroid parity with just the number of lines as input.
Our main theorem of this section is the following.

\begin{theorem}\label{thm:FLMMIsolatingWeight}
    There exists an algorithm that, given $m \in \mathbb{Z}_{+}$, outputs a set $\mathcal{W}\subseteq \mathbb{Z}_+^m$ of $m^{O(\log m)}$ weight assignments with weights bounded by $m^{O(\log m )}$ such that, for any fractional linear matroid matching polytope $P$ of $m$ lines, there exists at least one $w\in \mathcal{W}$ that is isolating for $P$, in time $\polylog (m)$ using $m^{O(\log m)}$ many parallel processors.
\end{theorem}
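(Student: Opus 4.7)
The plan is to apply the general lattice-based framework of Gurjar, Thierauf, and Vishnoi~\cite{DBLP:journals/siamcomp/GurjarTV21}, formalized later in the paper as \cref{thm:IsolatingLatticeVertex}: to every face $F$ of a polytope one associates a lattice $\Lambda_F$, and if every such $\Lambda_F$ has polynomially bounded near-shortest vectors, then a family $\mathcal{W} \subseteq \mathbb{Z}_+^m$ of size $m^{O(\log m)}$ with weights at most $m^{O(\log m)}$ can be constructed in $\polylog(m)$ time using $m^{O(\log m)}$ processors, at least one $w \in \mathcal{W}$ being isolating. The task therefore reduces to verifying this lattice condition uniformly over every face of the fractional linear matroid matching polytope $P$.

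To obtain a handle on the faces I would invoke the Gijswijt--Pap characterization (\cref{thm:MaxFaceOfPolytope}): every face $F$ has the form $\{y \in P : y_S = 0,\ D_F y = b_F\}$ for some $S \subseteq [m]$, some integer $b_F$, and a matrix $D_F \in \{0,1,2\}^{k \times m}$ each of whose columns sums to exactly two. Projecting away the forced-zero coordinates in $S$, the face lies in the affine space cut out by $D_F y = b_F$, and the lattice $\Lambda_F$ to be analyzed is the integer kernel $\ker_{\mathbb{Z}} D_F$ (with a harmless rescaling by $2$ to account for the half-integrality of the vertices of $P$). It thus suffices to exhibit a short generating set, and hence a polynomial near-shortest-vector bound, for $\ker_{\mathbb{Z}} D_F$.

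The key observation driving the analysis is that the column-sum-two, $\{0,1,2\}$-valued structure makes $D_F$ the (unsigned) vertex--edge incidence matrix of a multigraph $G_F$ in which loops are allowed: columns with two $1$s are ordinary edges, and columns with a single $2$ are loops. Standard structure theory for signless incidence matrices then decomposes any integer vector in $\ker D_F$ into elementary generators supported either on an even cycle (alternating $\pm 1$ entries) or on a pair of odd cycles joined by a path -- a ``handcuff'', with $\pm 1$ entries on ordinary edges and $\pm 2$ entries only at loops, which play the role of odd cycles of length one. Each such generator has support of size at most $O(m)$ and entries of magnitude $O(1)$, so any near-shortest vector of $\ker_{\mathbb{Z}} D_F$ has $\ell_\infty$-norm at most $\pol(m)$, exactly what GTV require; feeding this bound into \cref{thm:IsolatingLatticeVertex} then delivers $\mathcal{W}$ with the stated size, weight, time, and processor bounds.

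I expect the main obstacle to be the odd cycle and loop case: for bipartite matching and linear matroid intersection the defining matrices are totally unimodular, and the kernel lattice is spanned by plain $0/\pm 1$ cycle vectors, whereas here odd cycles and loops generically force parity arguments and $\pm 2$ coefficients. One must check that the handcuff construction actually produces lattice elements (and not merely elements of the real kernel), that it works \emph{uniformly} over every $D_F$ arising from a face of $P$, and that the resulting bound is on genuinely near-shortest vectors rather than only on a chosen basis. Once these points are pinned down, the theorem follows directly by invoking \cref{thm:IsolatingLatticeVertex} with the lattice bound established above.
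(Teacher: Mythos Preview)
Your high-level strategy matches the paper's: apply the GTV framework via \cref{thm:IsolatingLatticeVertex}, use the Gijswijt--Pap description of faces (\cref{thm:MaxFaceOfPolytope}) to get a column-sum-two matrix $D_F\in\{0,1,2\}^{k\times m}$, and interpret $D_F$ as the signless incidence matrix of a multigraph with loops. Your even-cycle/handcuff generators are essentially the paper's ``alternating circuits'' (closed walks of even length): traversing a handcuff as odd-cycle $\to$ path $\to$ odd-cycle $\to$ path gives an even-length closed walk whose $\pm1$ indicator lies in $\ker_{\mathbb Z}D_F$.

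There is, however, a genuine gap in the final step. You conclude that near-shortest lattice vectors have $\ell_\infty$-norm $O(1)$ and assert this is ``exactly what GTV require''. It is not: \cref{thm:IsolatingLatticeVertex} asks for a polynomial bound on the \emph{number} of vectors $v\in\mathcal L_F$ with $|v|<c\,\lambda(\mathcal L_F)$, not on their entry size. Bounded entries alone are useless here---a priori there could be exponentially many $\{0,\pm1,\pm2\}$-vectors of small $\ell_1$-norm in the lattice. The paper establishes the required count in two steps: first, a conformal decomposition (\cref{cl:latticeVectorDecomposition}) shows that every $x\in\mathcal L_D$ with $|x|<2\lambda(\mathcal L_D)$ is itself the alternating indicator vector of a \emph{single} alternating circuit $C$ with $|x|=|C|$; second, it invokes a counting lemma of Svensson and Tarnawski~\cite{DBLP:journals/corr/SvenssonT17} bounding the number of alternating-circuit indicators of size less than twice the minimum by $n^{17}$. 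Your proposal supplies neither the ``single-circuit'' reduction nor any substitute for the Svensson--Tarnawski count, so as written it does not verify the hypothesis of \cref{thm:IsolatingLatticeVertex}.
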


To prove \cref{thm:FLMMIsolatingWeight}, we use the following theorem given by Gurjar, Thierauf, and Vishnoi~\cite{DBLP:journals/siamcomp/GurjarTV21} that gives an isolating weight assignment for any polytope satisfying a certain property. For a face $F$ of a polytope, let $\mathcal{L}_F$ denote the lattice defined by
\[
    \mathcal{L}_F = \{v\in \mathbb{Z}^{m} \mid \text{$v= \alpha(x^1-x^2)$ for some $x^1,x^2\in F$ and $\alpha \in \F$}\}.
\]
 Let $\lambda(\mathcal{L})$ denote the length of the smallest non-zero vector of a lattice $\mathcal{L} \subseteq \mathbb{Z}^m$.  

\begin{theorem}[{\cite[Theorem 2.5]{DBLP:journals/siamcomp/GurjarTV21}}]\label{thm:IsolatingLatticeVertex}
    Let $k$ be a positive integer and $P \subseteq \mathbb{R}^m$ a polytope such that its extreme points are in $\{0,1/k,2/k,\dots,1\}^m$ and there exists a constant $c>1$ with
    \[|\{v\in \mathcal{L}_F \mid | v |  < c\lambda(L_F)\}| \leq m^{O(1)}\]
    for any face $F$ of $P$.
    Then, there exists an algorithm that, given $k$ and $m$, outputs a set $\mathcal{W}\subseteq \mathbb{Z}^m$ of $m^{O(\log km)}$ weight assignments with weights bounded by $m^{O(\log km )}$ such that there exists at least one $w\in \mathcal{W}$ that is isolating for $P$, in time $\polylog (km)$ using $m^{O(\log km)}$ many parallel processors.
\end{theorem}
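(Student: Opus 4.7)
The strategy is to instantiate \cref{thm:IsolatingLatticeVertex} of Gurjar--Thierauf--Vishnoi with $k=2$. Half-integrality of the fractional linear matroid matching polytope, recorded in the introduction, places its extreme points in $\{0,1/2,1\}^m$, so the vertex-coordinate hypothesis is immediate. Once the combinatorial near-shortest-vector hypothesis is checked, the theorem directly produces the set $\mathcal{W}$ of $m^{O(\log m)}$ weight assignments in $\polylog(m)$ time with $m^{O(\log m)}$ processors, and all algorithmic content of the statement is inherited from \cref{thm:IsolatingLatticeVertex}. The task therefore reduces to showing that, for some constant $c>1$ and every face $F$ of every fractional linear matroid matching polytope on $m$ lines,
\[
    |\{v \in \mathcal{L}_F \mid |v| < c\,\lambda(\mathcal{L}_F)\}| \leq m^{O(1)}.
\]

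To extract a workable description of $F$, the plan is to invoke \cref{thm:MaxFaceOfPolytope} of Gijswijt--Pap. Every face $F$ maximises some integer weight over the polytope, and the theorem produces a subset $S\subseteq[m]$ on which $F$ vanishes, a matrix $D_F\in\{0,1,2\}^{r\times m}$ (for some $r\leq n$) whose columns each sum to exactly $2$, and a vector $b_F\in\mathbb{Z}^r$, such that $F$ is precisely the set of fractional matroid matchings $y$ with $y_e=0$ for $e\in S$ and $D_Fy=b_F$. Hence $\mathcal{L}_F$ is contained in the integer kernel $\Lambda_F\coloneqq \{v\in\mathbb{Z}^m : v_e=0\text{ for }e\in S,\ D_Fv=0\}$, and the task reduces to controlling short vectors in $\Lambda_F$. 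Crucially, $D_F$ is in general not totally unimodular, so the TU-based specialisations of \cref{thm:IsolatingLatticeVertex} used in~\cite{FGT16,ST17,GT17} do not apply directly; the column-sum-two structure of $D_F$ must be exploited in a more refined way.

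The combinatorial key is that $D_F$ is exactly the unsigned incidence matrix of a multigraph $G_F$ on vertex set $[r]$ with loops allowed: a column with two $1$-entries encodes an ordinary edge, and a column with a single $2$-entry encodes a self-loop. A vector $v\in\Lambda_F$ is thus an integer edge-labelling of $G_F$ whose signed degree at every vertex vanishes, where a loop contributes twice its label. Passing to the bipartite double cover of $G_F$, one shows that $\Lambda_F$ is generated by \emph{elementary} vectors of three types, each whose $\ell_1$ norm equals the number of edges in its support: (i) alternating $\pm1$-labellings of even closed walks, (ii) $\pm1$-labellings of two vertex-disjoint odd closed walks joined by a simple path carrying alternating $\pm2$-entries, and (iii) the variant of (ii) in which an odd cycle is replaced by a self-loop.

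Granted this structural description, the counting mirrors the near-shortest-vector analysis of~\cite{ST17} for the matching polytope: any $v\in\mathcal{L}_F$ with $|v|<c\,\lambda(\mathcal{L}_F)$ decomposes into $O(1)$ elementary pieces of total combinatorial length $O(\lambda(\mathcal{L}_F))$, and for each type the number of such structures in $G_F$ is bounded by $m^{O(1)}$ by enumerating a constant number of root edges or vertices that determine the rest of the structure among short walks. The main technical obstacle is the structural step: the mixture of $\pm1$ and $\pm2$ entries in kernel vectors, together with self-loops, rules out a direct reduction to the cycle space of a plain graph, and instead requires a signed-graph or T-join analysis tailored to the column-sum-two pattern of $D_F$. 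Once this decomposition and its corresponding enumeration are in place, \cref{thm:IsolatingLatticeVertex} delivers the theorem.
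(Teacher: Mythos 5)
Your proposal does not prove the stated theorem; it assumes it. \Cref{thm:IsolatingLatticeVertex} is the general Gurjar--Thierauf--Vishnoi isolation result: for \emph{any} polytope $P$ with $1/k$-integral extreme points whose face lattices $\mathcal{L}_F$ have only $m^{O(1)}$ near-shortest vectors, one can deterministically produce $m^{O(\log km)}$ candidate weight vectors, one of which isolates a vertex of $P$. Your write-up opens by ``instantiating \cref{thm:IsolatingLatticeVertex} with $k=2$'' and then spends all of its effort verifying the near-shortest-vector hypothesis for fractional linear matroid matching polytopes via \cref{thm:MaxFaceOfPolytope} and a decomposition of $\ker D_F$ into walk-supported elementary vectors. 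That is an argument for \cref{thm:FLMMIsolatingWeight} (and \cref{lem:nearShortestVectors}), not for \cref{thm:IsolatingLatticeVertex} itself, which in the paper is imported from \cite{DBLP:journals/siamcomp/GurjarTV21} with only the remark that the integral-vertex proof extends to $1/k$-integral vertices. Nothing in your text addresses why the near-shortest-vector condition \emph{implies} the existence of a small isolating family, which is the entire content of the statement.

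A genuine proof would have to reproduce the lattice-based isolation machinery: (i) if a weight vector $w$ is not isolating, the face $F_w$ of $w$-maximizers is positive-dimensional, so $\mathcal{L}_{F_w}$ contains a nonzero vector of norm bounded by a polynomial in $k$ and $m$; (ii) augmenting $w$ (with a suitable scaling) by any weight vector that has nonzero inner product with every $v\in\mathcal{L}_{F_w}$ with $|v|<c\lambda(\mathcal{L}_{F_w})$ moves to a subface whose lattice has minimum length larger by a factor of $c$; (iii) since $\lambda$ cannot exceed the trivial upper bound, $O(\log_c(km))$ rounds reach a vertex, and in each round the $m^{O(1)}$ near-shortest vectors to be hit can be handled obliviously by one of $m^{O(1)}$ standard hashing weight vectors, so taking all combinations over the rounds yields the $m^{O(\log km)}$ family with the claimed weight bounds and parallel complexity. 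None of these steps appear in your proposal. (As a side remark, the face-specific material you do include is close in spirit to the paper's proof of \cref{lem:nearShortestVectors}, though the paper works with a single class of elementary vectors --- alternating indicator vectors of even closed walks in a multigraph with loops --- and then invokes the Svensson--Tarnawski count, rather than your three-type decomposition with $\pm 2$ entries; but that material belongs to a different theorem.)
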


It should be emphasized that the algorithm provided in \cref{thm:IsolatingLatticeVertex} requires only $k$ and $m$ and does not access the polytope $P$ itself.

\begin{remark}
    The result of \cite[Theorem 2.5]{DBLP:journals/siamcomp/GurjarTV21} gives an algorithm for a polytope with only integral extreme points. However, the proof can be easily modified to show the above theorem.
\end{remark}

Since a fractional matroid parity polytope $P$ is half-integral~\cite{GIJSWIJT2013509}, we just have to show that the number of near shortest vectors in $\mathcal{L}_F$ is polynomially bounded for any face $F$ of $P$ to apply \cref{thm:IsolatingLatticeVertex}.
Let $D_Fx=b_F$ be a system of equalities defining the affine space spanned by a face $F$.
Then, $\mathcal{L}_F$ is exactly the set of integral vectors   in the null space of $D_F$, i.e.,
\[\mathcal{L}_F=\{v\in \mathbb{Z}^m\mid D_Fv=0\}.\]
From \cref{thm:MaxFaceOfPolytope}, we can take $D_F$ such that
% Gijswijt and Pap \cite[proof of Theorem 1]{GIJSWIJT2013509} showed that for any weight assignment $w\in \F^m$, there exists a solution to the dual of the linear program of maximizing $w\cdot x$ over~\eqref{eq:FractionalPolytope} such that the non-zero co-ordinates of the solution correspond to a chain of sets of $E$. This implies that for any face $F$ of the fractional matroid matching polytope, the affine space spanned by it can be described by $D_Fx=b_F$ such that 
its entries are in $\{0,1,2\}$ and the sum of entries of every column is 2.
Now, we prove the following lemma, which directly implies \cref{thm:FLMMIsolatingWeight} using \cref{thm:IsolatingLatticeVertex}. 
\begin{lemma}\label{lem:nearShortestVectors}
     Let $D\in \{0,1,2\}^{p\times m}$ be a matrix such that the sum of entries of each column equals 2. Let $\mathcal{L}_D$ denote the lattice $\{v\in \mathbb{Z}^m \mid Dv=0\}$. Then, it holds that
     \[|\{v\in \mathcal{L}_D  : | v |< 2\lambda(\mathcal{L}_D)\}|\leq m^{O(1)}.\]
\end{lemma}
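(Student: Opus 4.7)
My plan is to interpret $D$ combinatorially as the unsigned incidence matrix of a multigraph $G$ with loops: each row of $D$ is a vertex, each column with a single entry $2$ at row $i$ is a loop at vertex $i$, and each column with two entries $1$ at rows $i \neq j$ is a non-loop edge $\{i,j\}$. Then $v \in \mathcal{L}_D$ exactly when, at every vertex $u$, $\sum_{e = \{u,w\}} v_e + 2 \sum_{e \text{ loop at } u} v_e = 0$. As a useful first observation, $\ones^\top D = 2 \ones^\top$ implies $\ones^\top v = 0$ for any $v \in \mathcal{L}_D$, so $|v|$ is always even.

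The key structural step is to prove a sign-compatible (conformal) decomposition: any $v \in \mathcal{L}_D$ can be written as $v = \sum_{j=1}^k c_j$ where each $c_j$ is a \emph{circuit vector} supported on a minimal balanced substructure of $G$, the decomposition agrees with $v$ in sign at every coordinate, and thus $|v| = \sum_j |c_j|$. I will derive this as an extension of the classical conformal decomposition of integer null-space vectors of graph incidence matrices, adapted to handle loops (entries of $2$). The minimal balanced substructures to classify are: (a) a pair of identical columns---parallel non-loop edges, or two loops at the same vertex---yielding a length-$2$ circuit with entries $\pm 1$; (b) an even cycle, yielding a circuit with alternating $\pm 1$ coefficients along the cycle; (c) a theta or handcuff structure, consisting of two edge-disjoint odd cycles (possibly reduced to loops, viewed as odd cycles of length one) joined by a possibly empty path, with $\pm 1$ coefficients on the cycle/loop edges and $\pm 2$ coefficients on the path edges. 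By minimality of the supporting substructure, each such circuit vector is unique up to an overall sign.

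Given the conformal decomposition, the counting argument becomes short. If $|v| < 2 \lambda$, then $\sum_j |c_j| < 2\lambda$ while each $|c_j| \geq \lambda$; hence $k = 1$, so $v$ is itself a circuit vector. It is determined up to sign by its support, which is a minimal balanced substructure of $G$ giving a circuit of length less than $2\lambda$. Therefore it suffices to show that $G$ contains at most $m^{O(1)}$ such substructures. Type-(a) pairs are counted by pairs of identical columns, giving at most $\binom{m}{2}$. For types (b) and (c), each substructure has at most $2\lambda - 1$ edges; I will bound the count by enumerating over a constant number of ``anchor'' edges or vertices together with shortest paths between them in $G$, using that $\lambda$ itself lower-bounds any competing closed walk, which keeps the resulting count polynomial.

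The main obstacle I anticipate is the final counting step. While the conformal decomposition reduces matters to circuits, bounding the number of short even cycles and short theta/handcuff structures by $m^{O(1)}$, without resorting to a crude enumeration of all cycles of length up to $2\lambda$ (which could be super-polynomial in $m$), requires care. The argument must exploit that every competing substructure has length at least $\lambda$, and encode each short substructure by a canonical bounded-size representative such as an anchor edge together with one or two shortest paths in $G$.
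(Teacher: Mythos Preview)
Your approach is correct and matches the paper's three-step structure: interpret $D$ as the unsigned incidence matrix of a multigraph $G_D$ with loops, conformally decompose every $v\in\mathcal L_D$ into elementary pieces, conclude that any $v$ with $|v|<2\lambda(\mathcal L_D)$ consists of a single piece, and finally count the short pieces. The paper's pieces are \emph{alternating circuits} (even closed walks, possibly with repeated edges) and their alternating indicator vectors $(\pm\mathbb{1})_C$, extracted by a greedy walk-following algorithm; your pieces are the minimal such objects (even cycles, and pairs of odd cycles or loops joined by a path carrying $\pm2$), so the two decompositions coincide for the purpose of the argument. The one substantive difference is the final count: the paper quotes Svensson--Tarnawski's lemma bounding the number of alternating indicator vectors of alternating circuits of size $<2\lambda$ by $p^{17}$, whereas you propose to re-derive such a bound via anchor edges and shortest paths. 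Your sketch is on the right track---that is essentially how the Svensson--Tarnawski bound is proved---but it is the only step you have not actually carried out, and you correctly flag it as the main obstacle; you could shortcut it by citing their lemma directly.
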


To prove \cref{lem:nearShortestVectors}, we introduce additional notions from \cite[Definition~3.1]{DBLP:journals/corr/SvenssonT17}.
Let $G$ be a multigraph with loops and $C=v_0\xrightarrow{e_0} v_1\xrightarrow{e_1}\dots \xrightarrow{e_{k-2}}v_{k-1}\xrightarrow{e_{k-1}}v_0$ a closed walk of even length in $G$ with edge repetition allowed. The \textit{size} of a closed walk $C$ denoted by $|C|$ is the number of edges in the walk. The \emph{alternating indicator vector}, denoted by $(\pm \mathbb{1})_C$, of $C$ is defined to be a vector $(\pm \mathbb{1})_C \coloneqq \sum_{i=0}^{k-1}(-1)^i\mathbb{1}_{e_i}$, where $\mathbb{1}_{e}\in \mathbb{R}^m$ is the elementary vector having 1 on position $e$ and 0 elsewhere. We say that $C$ is an \emph{alternating circuit} if its alternating indicator vector is non-zero. Note that for an alternating circuit $| (\pm \mathbb{1})_C | \leq |C|$.
For $x,y\in \mathbb{R}^m$, we say that $x$ is \emph{conformal} to $y$, denoted by $x \sqsubseteq y$, if $x_iy_i\geq 0$ and $|x_i|\leq |y_i|$ holds for all $i \in [m]$.

Now we are ready to prove \cref{lem:nearShortestVectors}.

\begin{proof}[{Proof of \cref{lem:nearShortestVectors}}]
  Let $G_D$ be a multigraph with vertex set $[p]$ and $m$ edges defined as follows.
  For every $e \in [m]$, the $e$th edge of $G_D$ is drawn between vertices $s$ and $t$ if $D[s,e]=D[t,e]=1$ for some $s, t \in [m]$ and it is a self-loop on vertex $s$ if $D[s,e]=2$ for some $s \in [m]$.
\begin{claim}\label{cl:latticeVectorDecomposition}
   For any $x\in \mathcal{L}_D$, there exists alternating circuits $C_1,C_2,\dots C_t$ in $G_D$ such that  $x=\Ac{C_1}+\Ac{C_2}+\dots +\Ac{C_t}$, $\Ac{C_i}\sqsubseteq x$, and $| \Ac{C_i} |=|C_i|$ for all $i \in [t]$.
\end{claim}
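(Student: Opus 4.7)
The plan is to decompose $x$ directly via an auxiliary multigraph, avoiding induction on $|x|$. Build a multigraph $H$ with vertex set $[p]$ in which each $e \in \supp(x)$ contributes $|x_e|$ parallel copies, colored \emph{red} if $x_e > 0$ and \emph{blue} if $x_e < 0$. Because a loop at $v$ contributes $2$ to $D[v,e]$, the lattice condition $Dx = 0$ translates, at each vertex $v$, into $\deg_R(v)=\deg_B(v)$, where both degrees count every loop end twice.

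I then invoke the classical decomposition of a balanced two-colored multigraph into closed alternating trails: at each vertex $v$, fix an arbitrary perfect matching between the red edge-ends at $v$ and the blue edge-ends at $v$ (which exists because the two sides have equal cardinality), and concatenate these local matchings with the edges themselves to partition the edges of $H$ into closed trails that alternate red and blue. Each such trail corresponds to a closed walk of even length in $G_D$ that uses any given edge $e$ at most $|x_e|$ times.

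Let $C_1,\dotsc,C_t$ be the trails produced this way. For each $C_j$, after choosing the starting point so that the first edge is red, the alternation forces every red occurrence of an edge to sit at an even position and every blue occurrence at an odd position, so $(-1)^i = \sg(x_{e_i})$ throughout the trail. Consequently $\Ac{C_j}_e = \sg(x_e)\cdot \#\{i : e_i = e\}$, which yields $|\Ac{C_j}|=|C_j|$ (there is no internal cancellation) and $\Ac{C_j}\sqsubseteq x$ (each copy of $e$ lies in at most one trail, so the count of $e$ in $C_j$ is at most $|x_e|$). Summing the alternating indicator vectors over all trails recovers $\sum_{j=1}^{t}\Ac{C_j}=x$, which is the desired decomposition.

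The main obstacle, and the only place loops need care, is the decomposition lemma itself: loops must be treated as contributing two half-edges at the same vertex (both of the same color), and the balance equation counting loop ends with weight $2$ is exactly what guarantees that the local matchings at each vertex are feasible. Once loops are absorbed into the half-edge framework in this way, the standard alternating-Eulerian argument goes through without change, and the rest is bookkeeping.
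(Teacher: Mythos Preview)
Your argument is correct and takes a genuinely different route from the paper. The paper proves the claim via an explicit greedy procedure (their Algorithm~1): starting from an edge $e_0$ with $x_{e_0}>0$, it extends a walk one step at a time, always choosing an edge $e$ incident to the current endpoint with $|x_e|>|y_e|$ and $(-1)^jx_e>0$, and outputs once the walk closes at even length; then it subtracts the resulting $\Ac{C}$ from $x$ and repeats. Most of their proof is spent showing that this extension never gets stuck, via a balance argument at the current endpoint $v_j$ using $D[v_j,*]\cdot x=0$. Your approach sidesteps that step entirely: you pass to the auxiliary multigraph $H$ with $|x_e|$ signed parallel copies, read off $\deg_R(v)=\deg_B(v)$ directly from $Dx=0$, and then invoke the standard transition-system decomposition into alternating closed trails via local red--blue matchings on half-edges. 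Both proofs rest on the same balance identity, but yours packages it as a one-shot structural decomposition rather than an iterative walk; this makes conformality $\Ac{C_j}\sqsubseteq x$ and the no-cancellation property $|\Ac{C_j}|=|C_j|$ fall out automatically from the coloring (every occurrence of $e$ in a trail has sign $\sg(x_e)$), whereas the paper tracks these invariants through the iterations. The price is that you lean on the folklore alternating-Eulerian lemma (with the loop-as-two-half-edges bookkeeping you correctly flag), while the paper's argument is fully self-contained.
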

\begin{proof}
    First, we show that $\Ac{C}\in \mathcal{L}_D$ holds for any alternating circuit $C=v_0\xrightarrow{e_0} v_1\xrightarrow{e_1}\dots \xrightarrow{e_{k-2}}v_{k-1}\xrightarrow{e_{k-1}}v_0$.
    Let $D[i, *]$ denote the $i$th row of $A$.
    
    Then, by definition of $G_D$, we have
    \begin{equation} \label{eq:AClatticevector}
        D[i,*] \cdot \Ac{C}= \sum_{j=0}^{k-1}{(-1)}^j D[i,e_j]
    \end{equation}
    for all $i \in [p]$.
    Only those edges of $C$ can contribute to the above sum with one or both the endpoints as vertex $i$ in $G_D$.
    We can partition these edges such that each part contains edges that appear continuously in the walk. For example, $\xrightarrow{e_s}v_{s+1}\xrightarrow {e_{s+1}}\cdots \xrightarrow{e_{t-1}}v_{t}\xrightarrow{e_t}$ forms a part of the partition if $v_s\neq i, $ $i= v_{s+1}= v_{s+2} = \dots = v_{t}$ and $i\neq v_{t+1}$. This part contributes  ${(-1)}^s+2\sum_{j=s+1}^{t-1}{(-1)}^j + {(-1)}^t$ to the sum~\eqref{eq:AClatticevector}. If the number of self-loops is even, then $s$ and $t$ have different parties, and otherwise, they are the same.  Hence, each block contributes zero to the sum~\eqref{eq:AClatticevector}, implying $D\cdot \Ac{C} = \mathbf{0}$.
    
    We show \cref{cl:latticeVectorDecomposition} by decomposing given $x$ into alternating indicator vectors by an iterative algorithm described in \cref{algo:vectorDecomposition}. 

 %   \begin{algorithm}
	% 	\caption{Decomposition of a Lattice Vector}
	% 	\label{algo:vectorDecomposition}
	% 	% \textbf{Input:} Two matroids $M_1=(E,\I_1)$, $M_2=(E, \I_2)$ and a weight assignment $\w:E\rightarrow \Z_{\geq 0}$ with polynomially large (in $|E|$) weights.\\
	% 	% \textbf{Output:} An optimal weight splitting $(\w^1, \w^2)\in \Z^E$ of $\w$.\\
	% 	% \textbf{Assumption:} Oracle access to {\decision}.
	% 	\begin{algorithmic}[1]
 %    \While {$| x | \neq 0$}
	% 	\State $y\gets 0_m, j\gets 0$
 %        \State Let $v_0\in [m]$ such that $x_{v_0}>0$
 %        \While{True}
 %        \If{$\exists e=(v_j,u)\in \delta(v_j)$ such that $|x_e|>|y_e|$ and $x_e.(-1)^j>0$}
 %            \State $y_e\gets y_e+(-1)^j$
 %            \State $v_j\gets u$
 %            \State $j\gets j +1$
 %        \Else
 %        \State Exit while loop
 %        \EndIf
 %        \If{$j\mod 2=0$ and $v_j=v_0$}
 %        \State $x\gets x-y$
 %        \State output $y$ and exit while loop
 %        \EndIf
 %        \EndWhile
 %  \EndWhile
	% 	\end{algorithmic}
  
	% \end{algorithm}
   \begin{algorithm}
		\caption{Decomposition of a lattice vector}\label{algo:vectorDecomposition}
		% \textbf{Input:} Two matroids $M_1=(E,\I_1)$, $M_2=(E, \I_2)$ and a weight assignment $\w:E\rightarrow \Z_{\geq 0}$ with polynomially large (in $|E|$) weights.\\
		% \textbf{Output:} An optimal weight splitting $(\w^1, \w^2)\in \Z^E$ of $\w$.\\
		% \textbf{Assumption:} Oracle access to {\decision}.
		\begin{algorithmic}[1]
    \While {$| x | \neq 0$}
		\State $y\gets \mathbf{0}, j\gets 1$
        \State Let $e_0\in [m]$ such that $x_{e_0}>0$
        \State  $y_{e_0}\gets 1$  and let $e_0$th edge in $G_D$ be $\{v_0,v_1\}$
        \While{True}
        \If{$\exists e\in [m]$ such that the $e$th edge is $\{v_j,u\}$, $|x_e|>|y_e|$, and ${(-1)}^j x_e>0$}
            \State $y_e\gets y_e+{(-1)}^j$
            \State $e_j\gets e, \, v_{j+1}\gets u$
            \State $j\gets j +1$
        \Else
        \State Output $x\notin \mathcal{L}_D$\label{line:fail}
        \EndIf
        \If{$j \equiv 0 \pmod 2$ and $v_j=v_0$}\label{line:if}
            \State $x\gets x-y$
            \State output $y$ and exit inner while loop
        \EndIf
        \EndWhile
  \EndWhile
		\end{algorithmic}
  
	\end{algorithm}

Let $x$ denote the lattice vector of the current iteration. The inner while loop always terminates as $| y|$ increases each step and cannot exceed $| x|$. In each iteration, the algorithm follows a walk in $G_D$. If the current iteration ends at Line~\ref{line:if}, then we get a closed cyclic walk of even length, say $C$, such that $y$ is its alternating indicator vector. Since $| y|>0$, $C$ is indeed an alternating circuit. 
We further have $| y | = |C|$ and $y\sqsubseteq x$ by construction, implying $| x-y|=| x|-| y|<| x|$. 
Since $y$ is the alternating indicator vector of an alternating circuit, we have $y\in \mathcal{L}_D$, which in turn implies $x-y\in \mathcal{L}_D$.

Now, we show that \cref{algo:vectorDecomposition} never goes to Line~\ref{line:fail} if $x\in \mathcal{L}_D$, i.e., we always get an alternating circuit in each outer iteration.
Suppose to the contrary that at some iteration there is no edge $e \in \delta(v_j)$ such that $|x_e|>|y_e|$ and ${(-1)}^j x_e>0$, where $\delta(v_j)$ denote the set of edges adjacent to $v_j$ in $G_D$.
Let $P=v_0\xrightarrow{e_0} v_1\xrightarrow{e_1}\dots \xrightarrow{e_{j-1}}v_{j}$ be the walk obtained till now in the current iteration.
By the same argument as above, we can show that $D_iy=0$ holds for every $i\in [p]$ except $i\in \{v_0,v_j\}$. In fact, $D_{v_j}y >0 $ if $j$ is odd and otherwise $D_{v_j}y< 0$. Without loss of generality, let $j$ be odd. Let $\supp^+(x)$ and $\supp^-(x)$ denote the subsets $\{i\in [m]\mid x_i>0\}$ and $\{i\in [m]\mid x_i<0\}$, respectively. Then, we have \[D_{v_j}x= \sum\limits_{e\in \supp^+(x)\cap \delta(v_j)} D[v_j,e]|x_e|-\sum \limits_{e' \in \supp^-(x)\cap \delta(v_j)} D[v_j,e']|x_{e'}|=0\] and
  \begin{equation}\label{eq:Ay}
      D_{v_j}y= \sum\limits_{e\in \supp^+(y)\cap \delta(v_j)} D[v_j,e]|y_e|-\sum \limits_{e' \in \supp^-(y)\cap \delta(v_j)} D[v_j,e']|y_{e'}|> 0.
  \end{equation}
  By construction of $y$, we have $y\sqsubseteq x$. 
  % \TO{is this correct?} \RR{ Isn't it ensured by the second part of the if condition of 6th step of the algorithm}. 
  % % $\forall e\in \supp^+(y)\cap \delta(v_j), |x_e|\geq |y_e|$ and   $\forall e'\in \supp^-(y)\cap \delta(v_j), |x_{e'}|\geq |y_{e'}|$ 
  % \TO{are these two different from $x \sqsubseteq y$?}. \TO{It is better not to use $\forall$ and $\exists$ in plain text.} \RR{ Changed it. Is it better now?} \TO{Thanks maybe it's fine}
  Since there does not exist any $e \in \delta(v_j) $ such that $|x_e|>|y_e|$ and $-x_e>0$, $\supp^-(x)\cap \delta(v_j)$ is the same as $  \supp^-(y)\cap \delta(v_j) $ and $|x_{e'}|=|y_{e'}|$ holds for $e' \in \supp^-(y)\cap \delta(v_j)$. This contradicts equation~\eqref{eq:Ay} as $|x_e|\geq |y_e|$ for $e\in \supp^+(y)\cap \delta(v_j)$.
  % \TO{Maybe the eq number is wrong}\RR{I am referring the last equation only labeled by eq:Ay. But it is still referring to a different equation} \TO{Fixed}. 
  Hence, the sum amount to $y$ in the current iteration is exactly the alternating indicator vector of an alternating circuit. Inductively, the algorithm successfully finds a decomposition of $x$.
\end{proof}

Now, we show that all the near-shortest vectors are alternating indicator vectors of an alternating circuit.

\begin{claim}\label{claim:shortest-alternating}
   % Let $\lambda(\mathcal{L}_D)$ be the length of the shortest non-zero vector of $\mathcal{L}_D$. Then,
   Any lattice vector $x\in \mathcal{L}_D$ with $| x|<2\lambda(\mathcal{L}_D)$ is an alternating indicator vector $\Ac{C}$ of some alternating circuit $C$ in $G_D$ such that $| x| = |C|$.
\end{claim}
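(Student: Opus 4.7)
The plan is to invoke Claim~\ref{cl:latticeVectorDecomposition} as a black box and then rule out any decomposition of length greater than one by a short $L_1$-norm counting argument. Given $x \in \mathcal{L}_D$ with $|x| < 2\lambda(\mathcal{L}_D)$, I first decompose $x = \Ac{C_1} + \Ac{C_2} + \cdots + \Ac{C_t}$ using the preceding claim, with each $\Ac{C_i}$ conformal to $x$ and satisfying $|\Ac{C_i}| = |C_i|$. The goal is then to show $t = 1$.

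The key observation is that conformality forces additivity of $L_1$ norms. Since $\Ac{C_i} \sqsubseteq x$ for every $i$, every non-zero coordinate of $\Ac{C_i}$ shares the sign of the corresponding coordinate of $x$, so no cancellation occurs when summing. This yields
\[
    |x| = \sum_{i=1}^{t} |\Ac{C_i}|.
\]
Each summand $\Ac{C_i}$ is a non-zero element of $\mathcal{L}_D$ (membership in the lattice was verified at the start of the proof of Claim~\ref{cl:latticeVectorDecomposition}, and non-triviality holds by the definition of an alternating circuit), so $|\Ac{C_i}| \geq \lambda(\mathcal{L}_D)$ for each $i$.

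Combining these two facts, if $t \geq 2$ we would obtain $|x| \geq 2\lambda(\mathcal{L}_D)$, contradicting the hypothesis. Hence $t = 1$, so $x = \Ac{C}$ for a single alternating circuit $C$, and $|x| = |\Ac{C}| = |C|$ as claimed. I do not anticipate any real obstacle: once the conformal decomposition from the previous claim is in hand, the argument is essentially a one-line pigeonhole on lattice lengths, and the only point needing care is reusing the fact that alternating indicator vectors of alternating circuits are themselves lattice vectors.
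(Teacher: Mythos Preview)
Your proof is correct and follows essentially the same approach as the paper: invoke Claim~\ref{cl:latticeVectorDecomposition}, use conformality to get $|x| = \sum_{i=1}^{t} |\Ac{C_i}|$, and then bound each summand below by $\lambda(\mathcal{L}_D)$ to force $t=1$. The only difference is cosmetic---the paper phrases it as a proof by contradiction starting from $t\ge 2$, while you argue directly---but the content is identical.
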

\begin{proof}
    Suppose to the contrary that the claim is not true.
    From \cref{cl:latticeVectorDecomposition}, there exists alternating circuits $C_1,C_2,\dots C_t$ with $t\geq 2$ such that $x=\Ac{C_1}+\Ac{C_2}+\dots +\Ac{C_t}$ with $\Ac{C_i}\sqsubseteq x$ and $| \Ac{C_i} |= |C_i|$ for all $i \in [t]$.
    We then have $| x|= \sum_{i=1}^t|\Ac{C_i}|\geq t\lambda(\mathcal{L}_D) \ge 2\lambda(\mathcal{L}_D)$, a contradiction. Hence, $ x = \Ac{C}$ for some alternating circuit $C$ with $| x| =|C|$.
\end{proof}

\Cref{claim:shortest-alternating} implies that  $\lambda(\mathcal{L}_D)$ is equal to the size of the smallest alternating circuit of $G_D$. It also implies that we just need a bound on the number of alternating indicator vectors that correspond to alternating circuits of size at most $2\lambda(\mathcal{L}_D)$ to prove \cref{lem:nearShortestVectors}.
The required bound on the number of near-shortest alternating circuits is given by the following theorem of Svensson and Tarnawski~\cite{DBLP:journals/corr/SvenssonT17}. ( The node-weight of an alternating circuit defined in \cite{DBLP:journals/corr/SvenssonT17} is same as its size for our case. ) 
% \RR{ There was mistake in the proof. The theorem statement of \cite{DBLP:journals/corr/SvenssonT17} was for the size of alternating circuits and not for the length of the alternating indicator vector. Now, I have corrected it by changing statements of \cref{cl:latticeVectorDecomposition}, \cref{claim:shortest-alternating}. Please verify. }
\begin{theorem}[{\cite[Lemma~5.4]{DBLP:journals/corr/SvenssonT17}}]
    Let $G$ be a graph on $n$ vertices such that the size of the smallest alternating circuit is $\lambda$. Then, the cardinality
of the set
\[\{\Ac{C} : C \text{ is an alternating circuit in $G$ of size at most } 2\lambda\}\]
is at most $n^{17}$.
    \end{theorem}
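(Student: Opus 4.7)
The plan is to show that alternating circuits of size at most $2\lambda$ are structurally \emph{primitive} --- in the sense that their alternating indicator vectors cannot be written as a sum of shorter alternating indicator vectors --- and then to enumerate such primitive alternating indicator vectors by choosing at most a constant number of vertices from $[n]$.

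First, I would argue that if $C$ is an alternating circuit with $|C|\leq 2\lambda$ and $\Ac{C} = u + v$ for non-zero lattice vectors $u, v \in \mathcal{L}_D$ with $u, v \sqsubseteq \Ac{C}$, then decomposing $u$ and $v$ separately via \cref{cl:latticeVectorDecomposition} would produce an alternating indicator vector of length strictly less than $\lambda$, contradicting the minimality of $\lambda$. Hence any alternating indicator vector $\Ac{C}$ of length at most $2\lambda$ is conformally indecomposable in $\mathcal{L}_D$; in particular its nonzero entries are all $\pm 1$, since a coordinate of absolute value $\geq 2$ would already allow a conformal split.

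Second, I would establish a structural classification of the support subgraph $H \subseteq G$ of such a primitive alternating circuit, where $H$ is taken with edge multiplicities equal to $|\Ac{C}|$. A primitive circuit cannot contain an internal shortcut that would permit a conformal decomposition, and this constraint forces $H$, after suppressing degree-$2$ vertices, to collapse to one of a small finite list of shapes: a simple even cycle, a theta graph (two internally disjoint paths between a common pair of vertices), a dumbbell (two cycles joined by a path), or analogous structures incorporating self-loops. Each such shape is determined by its $O(1)$ branch vertices together with the $O(1)$ internally-disjoint paths or cycles joining them, and once the branch vertices are fixed, the connecting degree-$2$ paths in $G$ are themselves determined up to a bounded amount of freedom.

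Finally, I would bound the count by summing over the finite list of shapes and over the $n^{O(1)}$ choices of the $O(1)$ branch vertices drawn from $[n]$, yielding a total of $n^{O(1)}$ alternating indicator vectors; a careful enumeration of the shape list and their vertex parameters should give the explicit bound $n^{17}$. The main obstacle is the structural classification step, because an alternating walk can revisit edges, vertices, and self-loops in patterns that look complicated at first glance; the proof likely proceeds by an exchange-style argument showing that any excess branching or revisiting in $H$ would permit one to peel off a strictly shorter alternating indicator vector conformal to $\Ac{C}$, contradicting primitivity.
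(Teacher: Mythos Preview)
The paper does not prove this theorem; it is quoted as \cite[Lemma~5.4]{DBLP:journals/corr/SvenssonT17} and used as a black box to finish the proof of \cref{lem:nearShortestVectors}. So there is no proof in the paper to compare your proposal against---any comparison would have to be with Svensson and Tarnawski's original argument.

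On the merits of your sketch, the first step has a real gap. From a conformal split $\Ac{C}=u+v$ with $u,v\in\mathcal{L}_D\setminus\{0\}$ and $u,v\sqsubseteq\Ac{C}$ you get $|u|+|v|=|\Ac{C}|\le|C|\le 2\lambda$, while $|u|,|v|\ge\lambda$; this forces $|u|=|v|=\lambda$ and $|\Ac{C}|=2\lambda$, which is \emph{not} a contradiction. So conformal indecomposability fails precisely at the boundary $|\Ac{C}|=2\lambda$, and the $\pm1$-entry conclusion does not follow either (a coordinate of absolute value $2$ does not by itself produce a conformal split into \emph{lattice} vectors). Since the theorem explicitly includes circuits of size exactly $2\lambda$, this boundary case must be handled, and it is exactly where composite structures---two minimum circuits glued together---proliferate.

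Your second and third steps are in the right spirit: Svensson and Tarnawski's proof does go by parameterising short alternating circuits by $O(1)$ anchor vertices and edges and then arguing that these anchors determine $\Ac{C}$. But the substance of their three-page argument is the case analysis you defer as ``the main obstacle'': controlling how an alternating walk can revisit vertices and edges (including the composite boundary cases above) without allowing the number of essentially distinct indicator vectors to blow up. Your proposal names this obstacle correctly but does not yet supply the mechanism that overcomes it, so as it stands it is a plausible outline rather than a proof.
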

    This completes the proof of Lemma \ref{lem:nearShortestVectors}.
 \end{proof}

\section{Finding Fractional Linear Matroid Matching via Isolation}\label{sec:algorithm}
In this section, we present an algorithm to find a fractional linear matroid matching to show \cref{thm:main}.
Let $\mathcal{W}$ be a set of weight assignments provided by \cref{thm:FLMMIsolatingWeight}.
Let $L = \{l_1, \dotsc, l_m\}$ be a set of lines with $l_i=\langle a_i,b_i \rangle$ for $i \in [m]$ and $A$ the associated matrix~\eqref{eq:A}.
Recall that, for $y\in \{0,1/2,1\}^m,$ the matrix $A^{\{2\}}(y)$ is defined as $\sum_{i=1}^m Y_i\otimes A_i$, where $Y_i=U_iU_i^T$ and $U_i$ is $2\times 2y_i$ matrix with indeterminates in its entries for $i\in [m]$.
Before presenting our algorithm, we show the following lemma.

\begin{lemma}\label{lem:DegAndFLMM}
    Let $w\in \mathbb{Z}^m$ be an isolating weight assignment with distinct weights for a fractional matroid parity polytope $P$.
%Let $U^i= \begin{pmatrix} x^i_{1,1} & x^i_{1,2} \\ x^i_{2,1} & x^i_{2,2} \end{pmatrix}$.
    Let $t_{1,1},t_{1,2},t_{2,1},t_{2,2}$ be indeterminates and $\tilde{A}_w$ be the $2n\times 2n$ matrix obtained by substituting $t^{w_i}_{p,q}$ for the $(p,q)$ entry of $U_i$ in $A^{\{2\}}(\ones)$ for $i \in [m]$ and $p,q \in [2]$.
% $x^i_{p,q}$ by  in $A^{\{2\}}(\ones)$.
% \TS{Why do we need four new indeterminates rather than a single indeterminate $t$? The other note by Taihei and myself (\url{https://www.overleaf.com/project/65892e2335f15cfc0fed5d1e}) uses a single indeterminate only.}
% \RR{In the note that you shared, the variables in matrices $Y_i \forall i\in [m]$ are substituted by random values, but we want to design a deterministic algorithm so we can't do that. Here,  by substituting the variables in $Y_i$ by four new interminates we are reducing the number of variables from $4m$ to 4 but now the individual degree of the variables have increased. } \TS{I see, thanks! I added a short remark on that.}
    Then, $ \pf(\tilde{A}_w)\neq 0$ if and only if there is a perfect fractional matroid matching.
    Moreover, it holds that \[\deg(\pf(\tilde{A}_w))=4 \max_{y\in P, \, |y|=\frac{n}{2}}w\cdot y,\]
    where $\deg$ means the total degree as a polynomial in four indeterminates $t_{1,1},t_{1,2},t_{2,1},t_{2,2}$.
% \TS{What is the degree of a four-variate polynomial? Can we compute the ``degree'' in NC? This seems necessary in the algorithm.} \RR{It is the sum of the individual degrees of the four variables. For an  $n\times n$ $A$ matrix whose entries are polynomials in $c$ variables such that the max individual degree of a  variable is $W$, then $det(A)$ can be computed in $polylog(n)$ time using $poly(W^c,n)$ parallel processors. } \TS{Thanks! Maybe it is well-known,  but it would be great if you have a reference to that fact (muliti-variate polynomial interpolation in NC). Also, it would be nice to state it as a separate lemma...}
\end{lemma}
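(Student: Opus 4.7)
The plan is to combine the Pfaffian expansion formula of \cref{thm:PfaffianExpansion} with the rank characterization of \cref{thm:rankA2Y} and the isolating property of $w$. The first step is to apply \cref{thm:PfaffianExpansion} at $y = \ones$ and then substitute $u^i_{p,q} \mapsto t_{p,q}^{w_i}$ entry-wise, yielding
\[
    \pf(\tilde A_w) \;=\; \sum_{\substack{z \in \{0,1/2,1\}^m \\ |z| = n/2}} \tilde Q_z,
    \qquad
    \tilde Q_z := \sum_{(J_1,\dots,J_m) \in \mathcal{J}^{\ones}(z)} \tilde D_{z,J},
\]
where $\tilde D_{z,J}$ is the substituted determinant. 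Since every entry of $U_i \otimes B_i$ is a scalar multiple of a single indeterminate $u^i_{p,q}$, each monomial of $\tilde D_{z,J}$ acquires exactly one factor $t_{p,q}^{w_i}$ per selected column, so $\tilde D_{z,J}$---and hence $\tilde Q_z$---is $t$-homogeneous of total degree $\sum_i |J_i|\, w_i = 4\,w\cdot z$ (using $|J_i| = 4 z_i$).

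The core of the argument consists of two claims. Set $H := \{i : z_i = 1/2\}$ and $I := \{i : z_i = 1\}$. First, if $z \in \{0,1/2,1\}^m$ has $|z|=n/2$ but $z \notin P$, then $\tilde Q_z = 0$: the forward direction of \cref{thm:rankA2Y} applied to $y = z$ forces $\pf(A^{\{2\}}(z)) = 0$ (since the only $z' \le z$ with $|z'| = n/2$ is $z$ itself), and specializing \cref{thm:PfaffianExpansion} to $y = z$ identifies this Pfaffian with the ``canonical'' term of $Q_z$---the one taking $J_i = \{1,2\}$ for every $i \in H$. The remaining $2^{|H|}-1$ terms of $Q_z$ arise from the canonical one by simultaneously renaming $u^i_{*,1} \leftrightarrow u^i_{*,2}$ on some subset of $H$, so they vanish as well and $\tilde Q_z = 0$. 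Second, let $z^*$ be the unique extreme point of $P$ with $|z^*| = n/2$ maximizing $w\cdot y$ (existing by isolation). Then $\pf(A^{\{2\}}(z^*)) \ne 0$ by the converse of \cref{thm:rankA2Y}, so the canonical term of $Q_{z^*}$ is a nonzero $u$-polynomial. Indexing the terms of $Q_{z^*}$ by $S \subseteq H$ (the set of $i \in H$ picking $J_i = \{3,4\}$), the $S$-term is homogeneous of combined $(t_{1,1},t_{2,1})$-degree $2\sum_{i \in I} w_i + 2\sum_{i \in H \setminus S} w_i$ and combined $(t_{1,2},t_{2,2})$-degree $2\sum_{i \in I} w_i + 2\sum_{i \in S} w_i$, so distinct $S$'s land in different bigraded components and cannot cancel, giving $\tilde Q_{z^*} \ne 0$.

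With these two claims in hand, the lemma follows by invoking isolation once more. Since $z^*$ is the unique maximizer of $w\cdot y$ over $P \cap \{|y|=n/2\}$ (a non-extreme maximizer would lie in the convex hull of the extreme maximizers, of which there is only one), any $z$ with $w\cdot z > w\cdot z^*$ must satisfy $z \notin P$ and hence $\tilde Q_z = 0$; at $w\cdot z = w\cdot z^*$ only $z = z^*$ contributes; and $\tilde Q_{z^*}$ is itself nonzero. Therefore $\deg \pf(\tilde A_w) = 4\,w\cdot z^* = 4\max_{y \in P,\,|y|=n/2} w\cdot y$, and $\pf(\tilde A_w) \ne 0$ iff a perfect fractional matroid matching exists.

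The main obstacle I expect is the non-cancellation of $\tilde Q_{z^*}$ under the monomial substitution. The bigraded separation across $S \subseteq H$ needs the subset sums $\sum_{i \in S} w_i$ to remain pairwise distinct---an implication of ``distinct weights'' that may require careful packaging in the weight construction---and within a single $\tilde D_{z^*,J}$ one must still check that distinct $u$-monomials of the (nonzero) $u$-determinant do not collapse and cancel under $u^i_{p,q} \mapsto t_{p,q}^{w_i}$. The ``one indeterminate per entry'' rigidity of the Kronecker products $U_i \otimes B_i$, combined with distinctness of the $w_i$, should deliver this.
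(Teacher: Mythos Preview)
Your overall architecture matches the paper's: expand via \cref{thm:PfaffianExpansion}, observe each $\tilde Q_z$ is homogeneous of degree $4\,w\cdot z$, argue that $\tilde Q_z=0$ whenever $z\notin P$, and show $\tilde Q_{z^*}\neq 0$ for the isolated maximizer. Your derivation of $Q_z=0$ for $z\notin P$ from \cref{thm:rankA2Y} is a clean repackaging of what the paper gets from \cite[Lemma~4.3]{DBLP:conf/soda/OkiS23}.

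There are, however, two problems in your treatment of $\tilde Q_{z^*}$. The minor one you already flag: separating the $2^{|H|}$ terms by the bigrading $(\deg_{t_{1,1}}+\deg_{t_{2,1}},\,\deg_{t_{1,2}}+\deg_{t_{2,2}})$ requires the subset sums $\sum_{i\in S}w_i$ to be pairwise distinct over $S\subseteq H$, which does \emph{not} follow from distinctness of the $w_i$. The paper sidesteps this entirely: it only observes that the canonical term $S=\emptyset$ has the \emph{strictly largest} $(t_{1,1},t_{2,1})$-degree (since all $w_i>0$), so it cannot be cancelled by any other $S$. You should do the same.

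The genuine gap is the second obstacle you name: you need $\tilde Q_{z^*}(J)\neq 0$, i.e.\ the canonical determinant remains nonzero \emph{after} the substitution $u^i_{p,q}\mapsto t_{p,q}^{w_i}$. Knowing the $u$-polynomial is nonzero does not suffice, and ``one indeterminate per entry plus distinct $w_i$'' does not by itself prevent collapse (there are more $u$-monomials than $t$-monomials, so cancellation is a real possibility). The paper resolves this by invoking the structural result from \cite[proof of Lemma~4.4]{DBLP:conf/soda/OkiS23}: because $z^*$ is an \emph{extreme point}, the canonical matrix $B_{z^*}(J)$ can be brought, by $\mathrm{GL}$ operations, into block upper-triangular form with diagonal blocks $T$ and $[U_{i_k}\ U_{i'_k}]$ ($k\in[q]$, $i_k\neq i'_k\in H$). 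After substitution, each $2\times2$ block $[U_{i_k}\ U_{i'_k}]$ becomes $\begin{bmatrix} t_{1,1}^{w_{i_k}} & t_{1,1}^{w_{i'_k}}\\ t_{2,1}^{w_{i_k}} & t_{2,1}^{w_{i'_k}}\end{bmatrix}$, nonsingular precisely because $w_{i_k}\neq w_{i'_k}$; and $T$ is shown nonsingular by specializing $t_{1,2}=t_{2,1}=0$. This extreme-point block structure is the missing ingredient in your plan---without it (or a substitute), the argument does not close.
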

\begin{proof}
    Firstly, we show the backward direction: $\pf(\tilde{A}_w)\neq 0\implies \pf(A^{\{2\}}(\ones))\neq 0 .$ This implies $A^{\{2\}}(\ones)$ has full rank, i.e., $2n$. From \cref{thm:ncrank}, we have \[\max_{y\in P}|y| = \frac{1}{2}\nr(A)= \frac{1}{4}\rank(A^{\{2\}}(\ones)) = \frac{n}{2}.\]
    Now, we show the other direction. From theorem \ref{thm:PfaffianExpansion},
    \[\pf(A^{\{2\}}(\ones))=\sum\limits_{\substack{z\in \{0,\frac{1}{2},1\}^m,\\|z|=n/2 }}\sum_{(J_1,\dots, J_m)\in \mathcal{J}(z)} \det([(U_1\otimes B_1)[J_1] \cdots (U_m\otimes B_m)[J_m]])\] 
    where $\mathcal{J}(z)$ is the family of $m$ tuples $(J_1,J_2,\dots, J_m)$ such that\[J_i=\begin{cases}
        \{1,2,3,4\} & \text{ if $z_i=1$, }\\
        \{1,2\} or \{3,4\} & \text{ if $z_i=1/2$,}\\
        \emptyset & \text{ if $z_i=0$.}
    \end{cases}\]
    Let $Q_z= \sum_{(J_1,\dots, J_m)\in \mathcal{J}(z)} \det([(U_1\otimes B_1)[J_1] \cdots (U_m\otimes B_m)[J_m]])$ and $\tilde{Q}_z$ denote $Q_z$ after the substitution. Now, we show the following claim.
    \begin{claim}
    \label{cl:degreeBound}
        For $z\in \{0,1/2,1\}^m$, if $\tilde{Q}_z\neq 0$ then, degree of any monomial in $\tilde{Q}_z$ is $4w\cdot z$.
    \end{claim}
    \begin{proof}

    Let $(J_1,J_2,\dots,J_m)\in \mathcal{J}(z)$ and $B$ be the matrix $[(U_1\otimes B_1)[J_1] \dots (U_m\otimes B_m)[J_m]]$ after the substitution. For $i\in [m]$ with $J_i\neq \emptyset$, each entry of $(U_i\otimes B_i)[J_i]$ has degree $w_i$ after the substitution. This also implies that for a fixed column of $B$, all the non-zero entries have the same degree. Hence, if $det(B)\neq 0$, the degree of any of its monomials is equal to the sum of the degrees of the non-zero entries of all the columns. This sum is equal to $ \sum_{j=1}^m |J_j|. w_j = 4w\cdot z$. Hence, degree of any monomial in $\tilde{Q}_z$ is $4w\cdot z$.
    \end{proof}
From \cite[Lemma 4.3]{DBLP:conf/soda/OkiS23}, $Q_z\neq 0$ implies that $z$ is a point in fractional matroid polytope. Let $z^*$ be the unique fractional matroid matching maximizing $w$. Hence, using claim \ref{cl:degreeBound} we can say that if $\tilde{Q}_{z^*}\neq 0,$ then $\pf(\tilde{A}_w)\neq 0$ with $\deg(\pf(\tilde{A}_w))=4w\cdot z^*$ as for any other $z$ in the polytope $\deg(\tilde{Q}_z)<\deg(\tilde{Q}_{z^*})$.

Now, we show that $\tilde{Q}_{z^*}\neq 0$. For $J=(J_1,\dots,J_m)\in \mathcal{J}(z)$, let $B_z(J)$ denote the matrix $[(U_1\otimes B_1)[J_1] \dots (U_m\otimes B_m)[J_m]]$, $Q_z(J)$ denote   $\det(B_z(J))$ and $\tilde{Q}_z(J)$ denote the four variate polynomial obtained by substitution. Let $\mathcal{L}_D(y)=\{i\in m\mid y_i=a\}$ where $y\in \{0,1/2,1\}^m$ and $a\in \{0,1/2,1\}$. Let $J=(J_1,\dots,J_m)\in \mathcal{J}(z^*)$ such that $J_i=\{1,2\}$ for $ i\in L_{1/2}(z^*)$. After substitution, for $i\in [m]$ with $J_i\neq \emptyset$, there are two columns in $B_z(J)$ such that each non zero entry in those columns has either $t_{1,1}^{w_i}$ or $t_{2,1}^{w_i}$. Hence, every monomial of $Q_{z^*}(J)$ is mapped to a monomial such that the sum of the powers of $t_{1,1}$ and $t_{2,1}$ is $\sum\limits_{i\in L_{1/2}(z^*)\cup L_1(z^*) } 2 w_i$. For any other tuple $J'\in \mathcal{J}(z^*)$,  the sum of the powers of $t_{1,1}$ and $t_{2,1}$ in the mapping is strictly less. Hence, $\tilde{Q}_{z^*}\neq 0 $ iff $\tilde{Q}_{z^*}(J)\neq 0$. 

Now, we show that $\tilde{Q}_{z^*}(J)\neq 0$. Without loss of generality, let $z^*_i=1$ for  $1\leq i\leq p$ and $z^*_i=1/2$ for $p< i\leq p+q$ and 0 otherwise such that $p=|L_1(z^*)|$ and $q=|L_{1/2}(z^*)|$. Then, 
\[
Q_{z^*}(J)=\det \begin{bmatrix}
            x^1_{1,1}B_1 & x^1_{1,2}B_1 & \dots  & x^p_{1,1}B_p & x^p_{1,2}B_p  & x^{p+1}_{1,1}B_{p+1}  & \dots &  x^{p+q}_{1,1}B_{p+q}   \\
            x^1_{2,1}B_1 & x^1_{2,2}B_1 & \dots    &        x^p_{2,1}B_p & x^p_{2,2}B_p  & x^{p+1}_{2,1}B_{p+1}  & \dots & x^{p+q}_{2,1}B_{p+q}          
        \end{bmatrix}.
\]
$z^*$ is an extreme point of the fractional matroid polytope. From  \cite[proof of Lemma 4.4]{DBLP:conf/soda/OkiS23}
 the above matrix can be transformed into the following matrix by changing the basis and then permuting the rows and columns:
  \begin{equation*}
     \begin{tikzpicture}[baseline=(current bounding box.center)]
\matrix (m) [matrix of math nodes,nodes in empty cells,right delimiter={]},left delimiter={[} ]{
 T & * & * & * & * & & * & *  \\
& U_{i_1} & U_{i'_1} &* &* & & * & * \\
& & & U_{i_2} & U_{i'_2} & & &    \\
   & & & & & & &  \\
  & & & & & & * & \\
  & & & & & & U_{i_q} & U_{i'_q}\\
} ;
\draw[thick][ dotted] (m-1-5)-- (m-1-7);
\draw[thick][ dotted] (m-2-5)-- (m-2-7);
\draw[thick][dotted] (m-2-8)-- (m-6-8);
\draw[thick][dotted] (m-2-5) -- (m-5-7);
\draw[thick][dotted] (m-2-7)--(m-5-7);
\draw[thick][dotted] (m-3-5)--(m-6-7);
\end{tikzpicture}.
 \end{equation*}
 These operations can be done by multiplying it with matrices in $GL(2n,\mathbb{R})$ from both sides. Note that this doesn't affect the fact that whether the determinant of the matrix is non-zero.
 Here, $U_j$ is $2\times 1$ matrix $[x^j_{1,1}, x^j_{2,1}]^T$ and $i_{k}\neq i'_{k}$ for $k \in [q]$ and belong to $\{p+1,\dots, p+q\}.$ The determinant is non-zero when $T$ and $[U_{i_k} U_{i'_k}] $ for $k \in [q]$ are non-singular.
 After substitution,
 \[[U_{i_k} U_{i'_k}] \mapsto \begin{bmatrix}
     t^{w_{i_k}}_{1,1} & t^{w_{i'_k}}_{1,1}\\
     t^{w_{i_k}}_{2,1} & t^{w_{i'_k}}_{2,1}
 \end{bmatrix}.\]
 Since weights of $w$ are distinct and $i_{k}\neq i'_{k}$, the matrix is non-singular. The only thing left is to show that $T$ is non-singular after the substitution. From \cite[proof of Lemma~4.4]{DBLP:conf/soda/OkiS23}, $T$ is a $4p\times 4p$ matrix that can be defined as follows:
 \[T=\begin{bNiceMatrix}
     x^1_{1,1}a'_1 & x^1_{1,1} b'_1 & x^1_{1,2}a'_1  & x^1_{1,2}b'_1 & \Cdots & x^p_{1,1}a'_p & x^p_{1,1} b'_p & x^p_{1,2}a'_p  & x^p_{1,2}b'_p\\
     x^1_{2,1}a'_1 & x^1_{2,1} b'_1 & x^1_{2,2}a'_1  & x^1_{2,2}b'_1 & \Cdots & x^p_{2,1}a'_p & x^p_{2,1} b'_p & x^p_{2,2}a'_p  & x^p_{2,2}b'_p
 \end{bNiceMatrix}.\]
 Here, $a'_i,b'_i\in \mathbb{F}^{2p}$ such that the matrix $D=[a'_1, b'_1, \dots , a'_p,b'_p]$ is non-singular. Let $T'$ denote the matrix $T$ after the substitution and $W=2\sum_{i=1}^pw_i$. By putting $t_{1,2}$ and $t_{2,1}$ to 0 in $T'$, the determinant of $T'$ is ${\det(D)}^2t_{1,1}^Wt_{2,2}^W$ which is non-zero. In conclusion, we showed
 \[\tilde{Q}_{z^*}(J)\neq 0 \implies \tilde{Q}_{z^*} \neq 0 \implies \pf(\tilde{A})\neq 0, \]
 as required.
\end{proof}
Now, we give \cref{algo:MainAlgo} to find a fractional matroid matching. Before that, here is an observation that we will use to get isolating weight assignments with distinct weights.
\begin{observation}\label{obs:DistinctWeights}
    Let $w\in \mathbb{Z}^m$ such that there is a unique $z^*$ in the fractional matroid matching polytope maximizing $w$ and $N$ be an integer $m^2$. Then, $w'$ with $w'_i=Nw_i+i$ is maximized uniquely by $z^*$ and $w'_i\neq w'_j $ for $i\neq j$.
\end{observation}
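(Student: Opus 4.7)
The plan is to verify the two conclusions of the observation separately; both reduce to elementary calculations. For distinctness of the coordinates of $w'$, I would simply compute, for $i \neq j$,
\[
w'_i - w'_j \;=\; N(w_i - w_j) \;+\; (i - j).
\]
If $w_i = w_j$, this equals $i - j \neq 0$; otherwise $|N(w_i - w_j)| \geq N = m^2$, which strictly exceeds $|i - j| \leq m - 1$. Either way $w'_i \neq w'_j$.

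For the claim that $z^*$ remains the unique $w'$-maximizer, I would exploit the half-integrality of the fractional linear matroid matching polytope $P$ recalled in \cref{sec:preliminaries} after \cite{GIJSWIJT2013509}. Since a linear function on $P$ attains its maximum at an extreme point, it suffices to show $w' \cdot z^* > w' \cdot z$ for every extreme point $z \neq z^*$ of $P$. Using $w \in \mathbb{Z}^m$ and $z, z^* \in \{0, \tfrac12, 1\}^m$, both $w \cdot z^*$ and $w \cdot z$ are half-integers, and the assumed uniqueness of $z^*$ as a $w$-maximizer gives the gap $w \cdot z^* - w \cdot z \geq \tfrac12$. I would then split
\[
w' \cdot z^* - w' \cdot z \;=\; N\bigl(w\cdot z^* - w\cdot z\bigr) \;+\; \sum_{i=1}^{m} i\,(z^*_i - z_i),
\]
bound the ``primary'' term from below by $N/2$, and bound the ``tiebreaker'' in absolute value by $\sum_i i \cdot |z^*_i - z_i| \leq \sum_i i = m(m+1)/2$ (using $|z^*_i - z_i| \leq 1$). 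For $N$ of the order of $m^2$, the primary term dominates, so $w' \cdot z^* - w' \cdot z > 0$, as required.

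Conceptually, this is just the standard \emph{integer-perturbation} trick: scale the integer weights $w$ by a large factor and tack on $(1, 2, \dots, m)$ as a tiebreaker, which forces distinct coordinates without disturbing the unique $w$-optimum. There is no genuine obstacle; the only care required is to confirm that $N$ is large enough for the primary term to strictly beat the tiebreaker, which is a routine arithmetic check.
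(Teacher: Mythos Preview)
The paper states this as an observation without proof, so there is nothing to compare against; your argument is exactly the standard integer--perturbation reasoning one would expect here, and the reduction to extreme points via half-integrality is the right move.

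There is one real wrinkle in the step you labeled ``routine.'' With $N=m^2$ as stated, your crude bounds do not close: the primary term is at least $N/2=m^2/2$, while your bound on the tiebreaker is $\sum_{i=1}^{m} i = m(m+1)/2$, and $m^2/2 \le m(m+1)/2$ for every $m\ge 1$. So the inequality $N/2 > m(m+1)/2$ you need actually fails. This is almost certainly a harmless imprecision in the observation itself --- taking $N=m(m+1)+1$ or $N=2m^2$ makes your argument go through verbatim and is still a polynomial blow-up, which is all the paper uses --- but you should say so explicitly rather than defer it as a routine check that in fact does not succeed with the stated constant.
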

The above statement shows that we can construct an isolating weight assignment with distinct weights by a polynomial blow-up in the weights. From now on, we assume that the weight assignments have distinct weights.

For a weight assignment $w\in \mathbb{Z}^m$ and $e\in [m]$, let $w^e$ define a new weight assignment on $[m]$ with $w^e_i=4w_i+1$ if $i=e$ otherwise $4w_i$. For a given fractional linear matroid matching instance given as $m$ rank 2 skew-symmetric $n\times n$ matrices $A_i$ for $ i\in [m]$ and a  weight assignment $w$ on $[m]$ and a vector $v\in \{0,1/2,1\}^m$, let $$\tilde{A}_w(v)= \sum_{i=1}^m V_i\otimes A_i $$ where $V_i$ is the $2\times 2$ zero matrix if $v_i$ is $0$ otherwise $V_i=T_iT_i^T$. Here, $T_i$ is $\begin{bmatrix}
    t^{w_i}_{1,1} & t^{w_i}_{1,2}\\
    t^{w_i}_{2,1} & t^{w_i}_{2,2}
\end{bmatrix}$  if $v_i=1$ and $[t^{w_i}_{1,1}\:  t^{w_i}_{2,1}]^T$ otherwise. Let $\ones$ and $\mathbf{0}$ denote the $m$-dimensional vectors with all ones and all zeros.

 \begin{algorithm}
		\caption{Quasi-NC algorithm to find a perfect fractional linear matroid matching}
		\textbf{Input:} $A_i \in \mathbb{F}^{n\times n} \forall i\in [m]$ such that each $A_i$ is  rank 2 skew-symmetric.\\
		\textbf{Output:} A perfect fractional matroid matching.
    \label{algo:MainAlgo} 
		\begin{algorithmic}[1]
         \State $y\gets \mathbf{0}$
			\State Compute a family of weight assignments $\mathcal W$ as promised by Theorem~\ref{thm:FLMMIsolatingWeight}.
            \label{step:W}
            \PFor{$w\in\mathcal W$} 
            \State  $W\gets \deg(\det(\tilde{A}_w(\ones)))$.  \label{step:WDet}
            \If{$W>0$}
                \PFor{$e\in [m]$ }
                \State $W^e\gets \deg(\det(\tilde{A}_{w^e}(\ones)))$
                \State $ 
                  y_e \gets 
                  \begin{cases}
                    1 & \text{if } W^e=4W+8\\
                    1/2 & \text{if } W^e=4W+4\\
                    0 & \text{otherwise.}   
                  \end{cases}  
                  $
                % \If{$W^e=2W+8$}
                % \State $y_e\gets 1$
                % \ElsIf{$W^e=2W+4$}
                % \State $y_e\gets 0.5$
                % \Else
                % \State $y_e \gets 0$
                % \EndIf
                \EndPFor
            \If{$|y|=n/2$ and $\det(\tilde{A}_{w}(y))\neq 0$}\label{line:is-FPM}
               \State Output $y$
               \EndIf
            \EndIf
            \EndPFor
       \State Output "No perfect fractional matroid matching exists."
		\end{algorithmic}
	\end{algorithm}

% \TS{How do we check $\det(A^{\{2\}}_{w}(y))\neq 0$ in Line~\ref{line:is-FPM} in quasi-NC? It should be possible by checking the canonical family corresponding to $y$, but is it quasi-NC time?} \TO{$A_w^{\{2\}}(y)$ is a matrix whose entries are polynomials in four variables $t_{1,1},t_{1,2},t_{2,1},t_{2,2}$ with quasi-polynomially large total degrees. The determinants of such matrices can be computed in quasi-NC time, can't they?}
% \TS{Ooops, sorry, I was confused with the notation. I thought matrices with and without prime (like $A$ and $\tilde{A}$) denote the matrices before and after substitution, resp., as in Lemma~4.1. Maybe we should use consistent notation?}\RR{Corrected this}
\begin{remark}
   The \cref{algo:MainAlgo} can be modified to solve the weighted fractional linear matroid matching in quasi-NC as well if the weights are $\qp(n)$. Since any subface of the maximum weight face of the fractional linear matroid matching polytope is again a face of the polytope, from \cref{thm:FLMMIsolatingWeight} we can construct a set of weight assignments $\mathcal{W}$ such that one of them isolates a maximum weight fractional matroid matching. For input weight assignment $v \in \mathbb{Z}_+^m,$ and $N>n\max_{w\in \mathcal{W},i\in [m]}w_i$, let $\mathcal{W}_v=\{Nv+w\mid w\in \mathcal{W}\}$. In the third step of \cref{algo:MainAlgo},  we choose weight assignments from $\mathcal{W}_v$ instead of $\mathcal{W}$  to find a maximum weight matching. 
\end{remark}
\paragraph{Proof of Correctness.} Let $P$ be the fractional linear matroid matching polytope for lines $\langle a_i,b_i\rangle$ for $i\in [m]$ such that $A_i=a_i\wedge b_i$. Suppose there doesn't exist a perfect fractional matroid matching, then from theorem \ref{thm:ncrank}\[
\rank(A^{\{2\}}(\mathbf{1}))= \rank(A^{\{2\}}) = 4\cdot \max_{z\in P}|z| < 2n.
\] Since $A^{\{2\}}(\mathbf{1})$ is not full rank, for any substitution of indeterminates $\det(A^{\{2\}}(\mathbf{1}))=0$. Hence, the algorithm doesn't return any perfect fractional matroid matching.

Now, suppose there exists a perfect fractional matroid matching. First, we show that if the algorithm returns $y\in \{0,1/2,1\}^m$ with $|y|=n/2$, then $y$ is indeed a perfect fractional matroid matching. Since $\det(\tilde{A}_{w}(y))\neq 0$, $\rank(A^{\{2\}}(y))=2n$ that is equal to the rank of $A^{\{2\}}$. Hence, from \Cref{thm:rankA2Y}, there exists $z\in \{0,1/2,1\}^m$ such that $z$ is a perfect fractional matroid  matching with $z\leq y$. Since $|y|=n/2$ and $y\in \{0,1/2,1\}^m$, we have $y=z$. Hence, $y$ is a perfect fractional matroid matching. 
% \RR{ TO VERIFY}

Now, we show that we get a perfect fractional matroid matching for at least one $w\in\mathcal{W}$. From \cref{thm:FLMMIsolatingWeight}, there exists an isolating weight assignment for $P$ in $\mathcal{W}$.  From \Cref{obs:DistinctWeights}, we can assume that the weights of weight assignment $w$ are distinct for all $w\in \mathcal{W}$. Let $w^*\in \mathcal{W}$ be a weight assignment such that $z^*\in P$  uniquely maximizes $w^*$. From \cref{lem:DegAndFLMM} and the fact that the determinant of a skew-symmetric matrix is the square of its Pfaffian, for $w^*$,   $W$ in \cref{algo:MainAlgo} satisfies \[W=8 \max_{x\in P,|z|=\frac{n}{2}}w^*\cdot z= 8 w^*\cdot z^*.\]  Also, we have
\begin{equation}\label{eq:We} w^e\cdot z^*=\begin{cases}
    4w^*\cdot z^* + 1 & \text{ if } z^*_e=1,\\
    4w^*\cdot z^* + \frac{1}{2} & \text{ if } z^*_e=\frac{1}{2},\\
    4w^*\cdot z^*  & \text{ if } z^*_e=0.
\end{cases}\end{equation}
Let $z\in \{0,1/2,1\}^m$ be any other fractional matroid matching. Since weights of $w^*$ are positive integers and $w^*$  is an isolating weight assignment for half-integral polytope $P$, $w^*\cdot z^*-w^*\cdot z\geq 1/2$. Hence, $w^e\cdot z^*-w^e \cdot z\geq 1$ for $e\in [m]$. This implies $z^*$ uniquely maximizes $w^e$ for $e\in [m]$. Hence, using equation~\eqref{eq:We} and \cref{lem:DegAndFLMM}, $y=z^*$ for weight assignment $w^*$. Since $|z^*|=n/2$, from \cref{thm:PfaffianExpansion}, we get the following expansion of $\pf(\tilde{A}_{w^*}(z^*))$ after the substitution in $A^{\{2\}}(z^*)$,
\begin{equation*}
        \pf(\tilde{A}_{w^*}(z^*))=\sum_{(J_1,\dots, J_m)\in \mathcal{J}^{z^*}(z^*)} \det([(V_1\otimes B_1)[J_1] \cdots (V_m\otimes B_m)[J_m]]),
    \end{equation*}
    where $\mathcal{J}^{z^*}(z^*)$ is the family of $m$ tuples $(J_1,J_2,\dots, J_m)$ such that\[J_i=\begin{cases}
        \{1,2,3,4\} & \text{ if $z^*_i=1$, }\\
        \{1,2\} & \text{ if $z^*_i=1/2$,}\\
        \emptyset & \text{ if $z^*_i=0$.}
    \end{cases}\]
In the proof of \cref{lem:DegAndFLMM}, we have already shown that $\pf(\tilde{A}_{w^*}(z^*))$ ( $\tilde{Q}_{z^*}(J)$ ) is not equal to zero  if $z^*$ is an extreme point of $P$ and $z^*$ uniquely maximizes $w^*$. Hence, the algorithm outputs $z^*$ for weight assignment $w^*$.

\paragraph{Time Complexity.}
From \cref{thm:FLMMIsolatingWeight}, in $\polylog(m)$ time using $m^{O(\log m)}$ parallel processors, we can construct a set of $m^{O(\log m)}$ weight assignments $\mathcal{W}$ such that for $w\in \mathcal{W}$ and $i\in [m]$, $w_i$ is bounded by $m^{O(\log m)}$  and there exists an isolating weight assignment $w\in \mathcal{W}$ for the fractional linear matroid matching polytope. For $w\in \mathcal{W} $, the weights of $w^e$ for $e\in [m]$  are $m^{O(\log m)}$. Hence, the entries of $\tilde{A}_w(\ones)$ and $\tilde{A}_{w^e}(\ones)$ are polynomial in four variables with the individual degrees at most $m^{O(\log m)}$. Hence, from \cref{thm:DetQuasiNC}, the determinant of these matrices can be computed in $\polylog(m)$ time using $m^{O(\log m)}$ parallel processors. Hence, \cref{algo:MainAlgo} is a quasi-NC algorithm, and \cref{thm:main} has been proved. 

\section{Black-box Algorithm for Non-commutative Edmond's Problem with Rank-two Skew-symmetric Coefficients}\label{sec:black-box}
% \TO{TODO: Edit}

In this section, we explain that our results can be used to solve black-box non-commutative Edmonds' problem with rank-two skew-symmetric coefficients.

We first review the black-box and white-box settings for (commutative) Edmonds' problem and polynomial identity testing.
Recall that Edmonds' problem is to test the non-singularity of a given linear symbolic matrix $A = \sum_{i=1}^m A_i x_i$.
Since the non-singularity of $A$ is equivalent to $\det(A) = 0$, it is a special (and actually equivalent) class of \emph{polynomial identity testing} (PIT), which is to test if a given polynomial is zero.
In the white-box setting, a polynomial is given as an explicit formula such as an algebraic branching program or the coefficient matrices $A_1, \dotsc, A_m$ in Edmonds' problem, and in the black-box setting, we can only access to an oracle to evaluate a polynomial.
%In other words, black-box algorithms must find a small set $\mathcal{H}$ of field elements such that any non-zero polynomial of concern returns non-zero value by at least one 
While a random substitution yields a randomized polynomial-time algorithm under the black-box setting, it is a long-standing open problem whether it can be derandomized, even for the white-box setting.

In non-commutative Edmonds' problem, a white-box algorithm requires explicit matrices $A_1, \dotsc, A_m$ as input same as the commutative problem.
In the black-box setting, recall from \cref{sec:introduction} that an algorithm is required to construct a non-commutative hitting set $\mathcal{H}$, that is, a set of $m$-tuples of square matrices over $\mathbb{F}$ such that for all $A=\sum_{i=1}^m x_iA_i$ with $A_i\in \mathbb{F}^{n \times n}$ for $i\in [m]$ and $\nr(A)=n$, there exists $(T_1,\dots, T_m)\in \mathcal{H}$ such that
$\det\left(\sum_{i=1}^m T_i\otimes A_i\right)\neq 0$.
Similar to the commutative setting, a random set $\mathcal{H}$ of polynomial size consisting of tuples of matrices of size at least $n-1$ works \cite{Derksen2017-jn}. Deterministically finding a hitting set, even of subexponential size, is still open \cite{DBLP:journals/focm/GargGOW20}. Nevertheless, some non-trivial results are known when we put some constraints on symbolic matrices. Gurjar and Thierauf \cite{GT17} construct a quasi-polynomial size hitting set when each $A_i$ has rank one. In this case, $\rank(A)=\nr(A)$. For a non-commutative algebraic formula with addition, multiplication, and inversion gates, there exists a symbolic matrix that has full non-commutative rank if and only if the non-commutative rational function computed by the formula is ``defined'' \cite{DBLP:journals/toc/HrubesW15}.  For such symbolic matrices, Arvind, Chatterjee and Mukhopadhyay \cite{arvind2023blackbox} construct a quasi-polynomial size hitting set. 

Here, we show the following corollary of \cref{thm:FLMMIsolatingWeight} and \cref{lem:DegAndFLMM}, which deterministically constructs a hitting set of quasi-polynomial size for the case where each $A_i$ is restricted to a rank-two skew-symmetric matrix, solving black-box non-commutative Edmonds' problem under this constraint in deterministic quasi-polynomial time.

\begin{corollary}
    For $m\in \mathbb{Z}_+$ and a field $\mathbb{F}$ of sufficient size, we can construct in time $\qp(m)$ a set $\mathcal{H}$ of size $\qp(m)$ that consists of $m$-tuples of $2 \times 2$ matrices over $\mathbb{F}$ such that, for all $n \times n$ symbolic matrix $A=\sum_{i=1}^m x_iA_i$ with rank-two skew-symmetric matrix $A_i$ and $\nr(A)=n$, there exists a tuple $(T_1,T_2,\dots T_m)\in \mathcal{H}$ that satisfies
    $\det \left(\sum_{i=1}^m T_i\otimes A_i \right) \ne 0$.
\end{corollary}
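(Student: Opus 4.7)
The plan is to derandomize Lemma~\ref{lem:DegAndFLMM} by combining the isolating weight family of \cref{thm:FLMMIsolatingWeight} with a deterministic Schwartz--Zippel style evaluation of the four auxiliary variables $t_{1,1},t_{1,2},t_{2,1},t_{2,2}$. First I invoke \cref{thm:FLMMIsolatingWeight} to obtain in time $\qp(m)$ a family of $\qp(m)$ weight assignments on $[m]$ with weights bounded by $M=m^{O(\log m)}$, and apply \cref{obs:DistinctWeights} so that within each assignment the weights are distinct while retaining isolation and the $m^{O(\log m)}$ bound; call the resulting family $\mathcal W$. To every $w\in\mathcal W$ I attach the symbolic skew-symmetric matrix $\tilde A_w(\mathbf 1)=\sum_{i=1}^m Y_i\otimes A_i$ of \cref{lem:DegAndFLMM}, where $Y_i=U_iU_i^\top$ and the $(p,q)$ entry of $U_i$ is $t_{p,q}^{w_i}$. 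Each entry of $Y_i$ has individual variable degree at most $2M$, so $\pf(\tilde A_w(\mathbf 1))$ is a polynomial in $t_{1,1},t_{1,2},t_{2,1},t_{2,2}$ of total degree at most $2nM$.

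Next I fix a set $T\subseteq\mathbb F$ with $|T|=4mM+1$ and form the grid $S=T^4$. The critical quantitative point is that whenever $\nr(A)=n$, \cref{thm:ncrank} furnishes a fractional matroid matching of size $n/2$, which lies in ${[0,1]}^m$ and therefore forces $n\le 2m$; consequently the total degree bound $2nM\le 4mM$ depends only on $m$. By Schwartz--Zippel, every nonzero four-variate polynomial of total degree at most $2nM$ has a nonvanishing evaluation on $S$. The proposed hitting set is
\[
  \mathcal H \;=\; \bigl\{\,(Y_1(w,s),\ldots,Y_m(w,s))\;:\; w\in\mathcal W,\; s\in S\,\bigr\},
\]
where $Y_i(w,s)$ is obtained from $Y_i$ by substituting $s_{p,q}$ for $t_{p,q}$. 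Its size is $|\mathcal W|\cdot |S|=\qp(m)$, each tuple consists of $m$ explicit $2\times 2$ matrices computable by four exponentiations, and the whole construction runs in $\qp(m)$ time while accessing only the integer $m$.

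For correctness, suppose $A=\sum_i x_iA_i$ with rank-two skew-symmetric $A_i$'s satisfies $\nr(A)=n$. \Cref{lem:DegAndFLMM} applied to the isolating $w^\star\in\mathcal W$ yields $\pf(\tilde A_{w^\star}(\mathbf 1))\not\equiv 0$, so some $s^\star\in S$ produces a nonzero Pfaffian evaluation. Since $Y_i(w^\star,s^\star)$ is symmetric and each $A_i$ is skew-symmetric, $\sum_i Y_i(w^\star,s^\star)\otimes A_i$ is itself skew-symmetric, and its determinant equals the square of that Pfaffian and is therefore nonzero, certifying the tuple $(Y_1(w^\star,s^\star),\ldots,Y_m(w^\star,s^\star))\in\mathcal H$ as the required witness. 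The main subtlety I would have to argue carefully is the \emph{instance-independent} degree bound $2nM\le 4mM$ coming from $n\le 2m$: it is exactly this uniformity that lets the pair $(\mathcal W,S)$, produced from $m$ alone, simultaneously serve every admissible $A$, so that the black-box hitting set depends only on $m$ as required.
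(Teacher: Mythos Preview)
Your proposal is correct and follows essentially the same approach as the paper's proof: both combine the isolating weight family of \cref{thm:FLMMIsolatingWeight} with \cref{lem:DegAndFLMM}, then derandomize the four auxiliary variables by a Schwartz--Zippel grid, using the observation $n\le 2m$ to make the degree bound (and hence the grid size) depend only on $m$. If anything, your write-up is slightly more careful than the paper's in explicitly invoking \cref{obs:DistinctWeights} to guarantee the distinct-weights hypothesis of \cref{lem:DegAndFLMM} and in justifying the inequality $n\le 2m$.
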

\begin{proof}
    Let $\mathcal{W}$ be the set of weight assignments $w:[m]\xrightarrow{}\mathbb{Z}_+$ from \cref{thm:FLMMIsolatingWeight} that can be constructed in $\qp(m)$ time. Let $D=\max \limits_{w\in \mathcal{W}, i\in [m]} w_i$ and $S\subseteq \mathbb{F}$ of size $2nD+1$. Then, for each $w\in {\mathcal{W}}$ and  $(a,b,c,d)\in S^4$, we put the tuple $(T_1,T_2,\dots ,T_m)$ in $\mathcal{H}$ such that $T_i=V_iV_i^T$ where,
    \[V_i=
    \begin{bmatrix}
     a^{w_i} & b^{w_i}\\
     c^{w_i} & d^{w_i}
 \end{bmatrix}\
    \]
    Let $A=\sum_{i=1}^m x_i A_i$ with $\nr(A)=n$ and $A_i$ as rank two skew-symmetric matrix for $i\in [m]$. Note that $n\leq 2m$. From \cref{thm:FLMMIsolatingWeight}, there exists an isolating weight assignment $w$ for fractional linear matroid matching polytope for $A$. From \cref{lem:DegAndFLMM}, $\tilde{A}_w$ is a $2n\times 2n$ matrix with entries as four variate polynomials with individual degree at most $D$ and $\det(\tilde{A}_w)\neq 0.$ It is a four variate polynomial with individual degree at most $2nD$. Hence, from the Schwartz--Zippel lemma, there exists a tuple $(T_1,T_2,\dots,T_m)\in \mathcal{H}$ corresponding to weight assignment $w$ such that $\det(\sum_{i=1}^m T_i\otimes A_i)$ is non-zero. The size of $\mathcal{H}$ is $|\mathcal{W}|\cdot |S|^4$, which is quasi-polynomially bounded in $m$ and hence can be constructed in time $\qp(m)$.
\end{proof}
\section{Conclusion} We showed that the problem of fractional linear matroid matching is in quasi-NC. The parallel complexity of linear matroid matching is still open. We also gave a black-box algorithm for non-commutative Edmond's problem for a symbolic matrix $A=\sum_{i=1}^mx_iA_i$ where each $A_i$ is a rank two skew-symmetric matrix. A natural question is whether we can extend our techniques to design a black-box algorithm when each $A_i$ has rank at most two or some larger positive integer constant. 
\bibliographystyle{alpha}
\bibliography{FractionalMatching}

%\clearpage
%\appendix
%\section*{Appendix}

\end{document}